\newtheorem{theo}{Theorem}
\newtheorem{corr}{Corollary}
\newtheorem{rem}{Remark}
\newcommand{\abs}[1]{\ensuremath{\left|#1\right|}}
\def \de{\partial}
\def \om{\omega}
\def \Om{\Omega}
\def \Ombar{\overline{\Omega}}
\def \rtri{\mathbb{R}^3}
\def \ep{\varepsilon}
\def \z{\mathbf{z}}
\def \ve{\mathbf{v}}
\def \T{\mathbf{T}}
\def \te{\mathcal{T}}
\def \sil{\rightarrow}
\def \O{\mathcal{O}}
\def \bO{\boldsymbol{\mathcal{O}}}
\def \G{\mathcal{G}}
\def \bG{\boldsymbol{\mathcal{G}}}
\def \u{\mathbf{u}}
\def \w{\mathbf{w}}
\def \d{\mathrm{d}}
\def \n{\mathbf{n}}
\def \e{\mathbf{e}}
\def \x{\mathbf{x}}
\def \y{\mathbf{y}}
\def \X{\mathbf{X}}
\def \f{\mathbf{f}}
\def \F{\mathbf{F}}
\def \U{\mathbf{U}}
\def \H{\mathbf{H}}
\def \D{\mathbf{D}}
\def \A{\mathbf{A}}
\def \B{\mathbf{B}}
\def \C{\mathbf{C}}
\def \Z{\mathbf{Z}}
\def \E{\mathcal{E}}
\def \DD{\mathcal{D}}
\def \BB{\mathcal{B}}
\def \Re{\mathcal{R}}
\def \We{\mathcal{W}}
\def \M{\mathcal{M}}
\def \bnul{\mathbf{0}}
\def \aone{\alpha_1}
\renewcommand{\div}{{\rm div}\,}
\def \sq{\unskip\nobreak\kern5pt\nobreak\vrule height4pt width4pt depth0pt} 
\begin{document}

\title{On the 3D steady flow of a second grade fluid past an obstacle}
\author{Pawe\l{} Konieczny}
\address{Institute for Mathematics and Its Applications, University of Minnesota\\
114 Lind Hall, 207 Church Street SE, Minneapolis, MN 55455, USA\\
E-mail: konieczny@ima.umn.edu}

\author{Ond\v rej Kreml}
\address{Mathematical Institute of Charles University\\
Sokolovsk\'a 83, 186 75 Praha 8, Czech Republic\\
E-mail: kreml@karlin.mff.cuni.cz}

\thanks{The work of the first author was partially supported by MS grant No. N N201 547438. The second author was supported by the project LC06052 (Jind\v{r}ich Ne\v{c}as Center for Mathematical Modeling), by the Charles University Grant Agency under Contract 2509/2007 and by the Grant Agency of the Czech Republic (number GA201/08/0315) }

\keywords{Viscoelastic fluid, polymeric fluid, Oldroyd model, local strong solution, $L^p$ theory}
\subjclass{35Q35, 76D03, 35Q30}

\begin{abstract}
We study steady flow of a second grade fluid past an obstacle in three space dimensions. We prove existence of solution in weighted Lebesgue spaces with anisotropic weights and thus existence of the wake region behind the obstacle. We use properties of the fundamental Oseen tensor together with results achieved in \cite{Koch} and properties of solutions to steady transport equation to get up to arbitrarily small $\ep$ the same decay as the Oseen fundamental solution.
\end{abstract}

\maketitle

\section{Introduction}
The flow of a second grade fluid is governed by the following system of equations.
\begin{equation}\label{1}
\left.
\begin{aligned}
\displaystyle\rho\frac{\partial\textbf{v}}{\partial t} + \rho\textbf{v}\cdot\nabla\textbf{v} + \nabla p = \text{div }\textbf{T} + \rho\f \\
\text{div } \textbf{v} = 0\\
\end{aligned}
\right\}
\text{in } (0,T) \times \Omega,
\end{equation}
where $\ve$ denotes the fluid velocity, $p$ is the pressure, $\rho$ is the constant density of the fluid, $\f$ stands for the external force and $\T$ is the Cauchy stress tensor which for the second grade fluid is given by (see e.g. \cite{TrNo})
\begin{equation}\label{2} 
\textbf{T} = 2\mu\D + 2\alpha_1\A_1 + 4\alpha_2\D^2.
\end{equation}
Here $\mu$ is a constant viscosity, $\D = \frac{1}{2}(\nabla\ve + (\nabla\ve)^T)$ is the symmetric part of the velocity gradient, $\alpha_1 > 0$ and $\alpha_2$ are the stress moduli and $\A_1$ is given by
\begin{equation}\label{3}
\A_1 = \frac{\de}{\de t}\D + \ve\cdot\nabla\D + (\nabla\ve)^T\D + \D\nabla\ve.
\end{equation}
The condition of thermodynamical stability yields $\alpha_1 + \alpha_2 = 0$, see \cite{DuFo}.
\begin{rem}
The question of signs and values of the stress moduli $\alpha_1$, $\alpha_2$ and especially of $\alpha_1 + \alpha_2$ in this model is not clear. In \cite{CoGa} the authors show that the constraint $\alpha_1 + \alpha_2 = 0$ is not necessary for the mathematical problem being well set. In \cite{GaPaRa} the authors show that for $\alpha_1 < 0$ the rest state of flow of second grade fluid in exterior domain is instable. Our results can be easily adapted also for the case $\alpha_1 + \alpha_2 \neq 0$, however we keep this thermodynamic constraint for simplicity. 
\end{rem}
We consider a steady flow past an obstacle, that is $\Om$ is an exterior domain $\rtri \setminus \BB$, where $\BB$ is a simply connected compact set and we assume that $B_{\kappa L}(\bnul) \subset \BB \subset B_L(\bnul)$ for some $\kappa > 0$ and $L > 0$. Plugging (\ref{1}) - (\ref{3}) together we get
\begin{equation}\label{4}
\begin{split}
\left.
\begin{aligned}
\displaystyle -\mu\Delta\ve -\aone(\ve\cdot\nabla)\Delta\ve + \nabla p = -\rho(\ve\cdot\nabla)\ve + \rho\f + \quad\\
+ \aone\div[(\nabla\ve)^T(\nabla\ve + (\nabla\ve)^T)]\quad\\
\div \ve = 0\quad\\
\end{aligned}
\right\}
\text{in }& \Omega \\
\ve = \bnul \qquad\text{ on }& \de\Om = \de\BB \\
\ve \sil \ve_\infty \qquad\text{ as }& \abs{\x} \sil \infty,
\end{split}
\end{equation}
where $\ve_\infty$ is the prescribed constant velocity at infinity. Assuming $\ve_\infty \neq \bnul$ we can rotate the coordinate system in such a way that $\ve_\infty = \beta\e_1 = (\beta,0,0)$ and denoting $\u = \ve - \ve_\infty$ we get from (\ref{4})
\begin{equation}\label{5}
\begin{split}
\left.
\begin{aligned}
\displaystyle -\mu\Delta\u -\aone(\u\cdot\nabla)\Delta\u -\aone\beta\Delta\frac{\de\u}{\de x_1} +\rho\beta\frac{\de\u}{\de x_1} + \nabla p = \quad\\ -\rho(\u\cdot\nabla)\u + \rho\f + \aone\div[(\nabla\u)^T(\nabla\u + (\nabla\u)^T)]\quad\\
\div \u = 0\quad\\
\end{aligned}
\right\}
\text{in }& \Omega \\
\u = -\ve_\infty = -\beta\e_1 \qquad\text{ on }& \de\Om = \de\BB \\
\u \sil \bnul \qquad\text{ as }& \abs{\x} \sil \infty.
\end{split}
\end{equation}
Next we rewrite the equations in dimensionless form, i.e. we introduce new velocity $\U = \u / \beta$ and new independent variable $\X = \x / L$, where $L$ is the diameter of the obstacle. We renormalize the pressure $P = \frac{p}{\rho\beta^2}$ and the external force $\F = \frac{\f L}{\beta^2}$ and introduce the Reynolds number $\Re = \frac{\rho\beta L}{\mu}$ and the Weisenberg number $\We = \frac{\aone\beta}{L \mu}$. However, for the sake of transparency, we keep writing small letters instead of capital letters. After renormalization we end up with
\begin{equation}\label{6}
\begin{split}
\left.
\begin{aligned}
\displaystyle -\Delta\u -\We(\u\cdot\nabla)\Delta\u -\We\Delta\frac{\de\u}{\de x_1} +\Re\frac{\de\u}{\de x_1} + \Re\nabla p = \quad\\ -\Re(\u\cdot\nabla)\u + \Re\f + \We\div[(\nabla\u)^T(\nabla\u + (\nabla\u)^T)]\quad\\
\div \u = 0\quad\\
\end{aligned}
\right\}
\text{in }& \Om \\
\u = -\e_1 \quad\text{ on }& \de\Om \\
\u \sil \bnul \qquad\text{ as }& \abs{\x} \sil \infty,
\end{split}
\end{equation}
where the renormalized domain $\Om = \rtri \setminus \DD$ and $B_\kappa(\bnul) \subset \DD \subset B_1(\bnul)$. Finally we follow the decomposition procedure proposed in \cite{MoSo}. We introduce new pressure $q$ as a solution to
\begin{equation}\label{Pr1}
q + \We[(\u + \e_1)\cdot\nabla] q = \Re p
\end{equation}
and we denote
\begin{equation}\label{Pr2}
-\Delta\u + \Re\frac{\de\u}{\de x_1} + \nabla q =: \z.
\end{equation}
Then $\z$ satisfies
\begin{multline}\label{Pr3}
\z + \We[(\u + \e_1)\cdot\nabla] \z = \Re\f - \Re(\u\cdot\nabla)\u + \We\div[(\nabla\u)^T(\nabla\u + (\nabla\u)^T)] - \\
 - \We(\nabla\u)^T\nabla q + \Re\We(\u\cdot\nabla)\frac{\de\u}{\de x_1} + \Re\We\frac{\de^2\u}{\de x_1^2}.
\end{multline}
Note that we still have the conditions
\begin{equation}\label{Pr4}
\begin{split}
\div \u &= 0 \qquad\text{ in } \Om\\
\u &= -\e_1 \quad\text{ on } \de\Om \\
\u &\sil \bnul \qquad\text{ as } \abs{\x} \sil \infty,
\end{split}
\end{equation}

Our main result is the following
\begin{theo}\label{maintheo}
Let $\f = \div \H$, $\H \in W^{k,2}(\Om)$, $k \geq 3$. Let $\Om \in C^{k+1}$ be an exterior domain in $\rtri$ and let $\Re_0, \We_0$ be sufficiently small. Then for any $\Re \in (0,\Re_0), \We \in (0,\We_0)$ there exists a unique solution $(\u,q)$ to the problem (\ref{Pr1}) - (\ref{Pr4}) for which the following estimates hold
\begin{equation}
\Re^{\frac{1}{4}}\left\|\u\right\|_4 + \left\|\nabla\u\right\|_{k,2} + \left\|q\right\|_{k,2} \leq K.
\end{equation}
If in addition $\f,\H \in L^p(\Om,\mu_{1-\frac{2}{p}}^{\frac{3}{2}-\frac{3}{p},2\om}(\cdot,\Re))$ for some $p>6$ and $\Re$ and $\We$ are sufficiently small, the previously obtained solution $(\u,q)$ has the following properties
\begin{equation}
\begin{split}
\u &\in L^p(\Om,\mu_{1-\frac{2}{p}}^{1-\frac{3}{p},\om}(\cdot,\Re)) \\
\nabla\u, \nabla^2\u &\in L^p(\Om,\mu_{1-\frac{2}{p}}^{\frac{3}{2}-\frac{3}{p},\om}(\cdot,\Re)) \\
q, \nabla q &\in L^p(\Om,\mu_{\frac{1}{2}-\frac{2}{p}}^{1-\frac{3}{p},\om}(\cdot,\Re)).
\end{split}
\end{equation}
In particular 
\begin{equation}
\u \in L^\infty(\Om,\mu_{1-\frac{2}{p}}^{1-\frac{3}{p},\om}(\cdot,\Re)).
\end{equation}
\end{theo}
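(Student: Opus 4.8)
The plan is to solve the coupled system \eqref{Pr1}--\eqref{Pr4} by a fixed-point iteration built around the linear Oseen problem \eqref{Pr2}, treating the two transport equations \eqref{Pr1} and \eqref{Pr3} as auxiliary problems that feed the Oseen right-hand side. Concretely, I would define a map $\Phi$ sending a pair $(\u,q)$ from a suitable small ball to a new pair $(\tilde\u,\tilde q)$ as follows: first evaluate the right-hand side of \eqref{Pr3}; then solve the steady transport equation for $\z$, the operator $I+\We[(\u+\e_1)\cdot\nabla]$ being invertible for small $\We$ because the drift $\u+\e_1$ is close to $\e_1$; then solve the Oseen system $-\Delta\tilde\u+\Re\,\de\tilde\u/\de x_1+\nabla\tilde q=\z$ with $\div\tilde\u=0$ and the data \eqref{Pr4}; and finally recover the genuine pressure $\tilde q$ from \eqref{Pr1}. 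The smallness of $\Re_0$ and $\We_0$ makes the quadratic terms $(\u\cdot\nabla)\u$, $(\nabla\u)^T(\nabla\u+(\nabla\u)^T)$ and $(\nabla\u)^T\nabla q$ small relative to the linear estimate, so $\Phi$ maps the ball into itself and contracts, yielding the unique solution together with the bound on $\Re^{1/4}\|\u\|_4+\|\nabla\u\|_{k,2}+\|q\|_{k,2}$.

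For the linear building blocks I would invoke, on one side, the $L^p$ and $W^{k,2}$ theory of the Oseen operator in the exterior domain $\Om$ (solvability of \eqref{Pr2} with the anisotropic decay of the Oseen fundamental tensor and the estimates of \cite{Koch}), and on the other the standard steady-transport estimates for $w+\We(\mathbf{a}\cdot\nabla)w=g$ with $\div\mathbf{a}=0$ and $\mathbf{a}\cdot\n=0$ on $\de\Om$. The $\Re^{1/4}\|\u\|_4$ term already signals that one works in the Oseen scaling rather than the Stokes scaling; the energy-type $W^{k,2}$ part is obtained by differentiating \eqref{Pr2} up to order $k$ and using $\H\in W^{k,2}(\Om)$, the transport steps costing no derivative since the coefficient $\u+\e_1$ is itself controlled in $W^{k,2}$.

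For the weighted part I would bootstrap from the solution just constructed. Using the extra hypothesis $\f,\H\in L^p(\Om,\mu_{1-2/p}^{3/2-3/p,2\om}(\cdot,\Re))$ I would run a second fixed-point argument, now in the anisotropically weighted spaces, again centered on \eqref{Pr2}: the weighted $L^p$ mapping properties of the Oseen resolvent give $\u$ the class $\mu_{1-2/p}^{1-3/p,\om}(\cdot,\Re)$ and $\nabla\u,\nabla^2\u$ the class heavier by half a power of $\abs{\x}$, namely $\mu_{1-2/p}^{3/2-3/p,\om}(\cdot,\Re)$, reflecting the extra decay of the derivatives of the Oseen tensor and quantified in \cite{Koch}. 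The nonlinear and transport terms are then reinserted through \eqref{Pr1} and \eqref{Pr3}, and their membership in the target weighted class follows from multiplication rules for the exponents of $\mu$ combined with the Sobolev-class control from the first part; the stated $L^\infty$ inclusion then drops out by a weighted Sobolev embedding for $p>6$.

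The hard part, I expect, will be the weighted estimates for the two transport equations. Writing the solution of $w+\We(\mathbf{a}\cdot\nabla)w=g$ along the characteristics of the drift $\mathbf{a}=\u+\e_1$ expresses $w$ as an exponential average of $g$ with kernel $\We^{-1}e^{-(t-\tau)/\We}$, which for small $\We$ concentrates at lag $t-\tau=\O(\We)$. Since the characteristics are almost parallel to the wake axis and travel only a distance $\O(\We)$ over that lag, the anisotropic weight $\mu$ is essentially constant across the averaging window and can be pulled through, giving $\|w\|\lesssim\|g\|$ in the weighted norm uniformly for small $\Re,\We$. Making this rigorous --- controlling the variation of both $\abs{\x}$ and $s(\x)=\abs{\x}-x_1$ along the flow and absorbing the commutator $[\mu,(\mathbf{a}\cdot\nabla)]$ through the smallness of $\We$ --- is the technical core on which the entire weighted contraction rests; once it is in place, the fixed point closes and all the asserted inclusions follow.
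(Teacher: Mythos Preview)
Your outline is essentially the paper's strategy: the same decomposition into Oseen problem \eqref{Oseen} plus steady transport \eqref{transport}, the same fixed-point map $\M:(\w,s)\mapsto\z\mapsto(\u,q)$, and the same two-stage argument (unweighted contraction first, weighted estimates second). A few points, however, deserve correction.

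First, the paper does not run a second contraction for the weighted part. Once the solution is obtained as the limit of iterates $(\u_{n+1},q_{n+1})=\M(\u_n,q_n)$ in the unweighted spaces, it suffices to show that $\M$ maps a large ball in the weighted space $V$ into itself; starting the iteration inside this ball then keeps all iterates, hence the limit, there. No Lipschitz estimate in the weighted norm is needed.

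Second, you have the technical core in the wrong place. The weighted transport estimate is \emph{not} the hard step: it is handled by the off-the-shelf Theorem~\ref{weightedtrans} (your commutator/characteristics heuristic is correct but unnecessary). The real work is in the weighted Oseen estimate. Here two ingredients are essential and missing from your sketch: (i) one must observe that $\z$ inherits a divergence structure, $\z=\div\Z$ with $\Z=\C(\H,\w,s)-\We\,\z\otimes(\w+\e_1)$, which is what allows the use of the integral representations \eqref{int0}--\eqref{intp1} and of Koch's Theorem~\ref{Koch} for $\nabla\u$ and $\nabla^2\u$; (ii) the representation formulas produce surface integrals over $\de\Om$ that must be estimated separately, and the paper does this by splitting $\Om$ into the three regions $\abs{\x}\le 1$, $1\le\abs{\x}\le 1/\Re$, $\abs{\x}\ge 1/\Re$ and using pointwise bounds on $\nabla^k\bO(\cdot,\Re)$ in each. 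These surface terms are what consume most of the effort; your proposal does not address them.

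Finally, a minor confusion: $q$ is not recovered from \eqref{Pr1}; it is produced directly by the Oseen solver. Equation \eqref{Pr1} defines the physical pressure $p$ in terms of $q$, but the theorem is stated for $(\u,q)$ and \eqref{Pr1} plays no role in the fixed-point map.
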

\begin{rem}
The weights $\mu_B^{A,\om}$ are defined in (\ref{weights}). As the power $p$ can be chosen arbitrarily large, we get almost the same asymptotic structure as for the fundamental solution $\bO$ of the Oseen system.
\end{rem}
\begin{rem}
Note that for $\f \in L^\infty(\Om,\eta^{\frac{3}{2}}_1(\x))$ it holds
\begin{equation}
\left\|\f\right\|_{L^p(\Om,\mu^{\frac{3}{2}-\frac{3}{p},2\om}_{1-\frac{2}{p}}(\cdot,\Re))} \leq C\Re^{-2\om-\frac{3}{p}}\left\|\f\right\|_{L^\infty(\Om,\eta^{\frac{3}{2}}_1(\cdot))}.
\end{equation}
\end{rem}

Throughout this paper we shall assume $\We$ and $\Re$ small. We introduce operator 
\begin{equation}\label{mapping}
\M : (\w,s) \mapsto \z \mapsto (\u,q),
\end{equation}
where for given $(\w,s)$, $\z$ is the solution to the transport equation
\begin{multline}\label{transport}
\z + \We[(\w + \e_1)\cdot\nabla] \z = \Re\f - \Re(\w\cdot\nabla)\w + \We\div[(\nabla\w)^T(\nabla\w + (\nabla\w)^T)] - \\
 - \We(\nabla\w)^T\nabla s + \Re\We(\w\cdot\nabla)\frac{\de\w}{\de x_1} + \Re\We\frac{\de^2\w}{\de x_1^2} =: \B(\f,\w,s) \qquad \text{in } \Om
\end{multline}
and $(\u,q)$ is the solution to the Oseen problem
\begin{equation}\label{Oseen}
\begin{split}
-\Delta\u + \Re\frac{\de\u}{\de x_1} + \nabla q &= \z \qquad\text{ in } \Om\\
\div \u &= 0 \qquad\text{ in } \Om \\
\u &= -\e_1 \quad\text{ on } \de\Om \\
\u &\sil \bnul \qquad\text{ as } \abs{\x} \sil \infty.
\end{split}
\end{equation}
We have decomposed the original problem into the Oseen problem (\ref{Oseen}) and the steady transport equation (\ref{transport}). Due to the pressence of the Oseen problem we expect the structure of solutions to correspond to the structure of the Oseen fundamental solution, especially the existence of the wake region behind the obstacle (compare with \cite{Finn}, \cite{Smith} for incompressible Navier-Stokes equations and \cite{NoPo} for viscoelastic fluid). Denoting $s(\x) = \abs{\x} - x_1$ one might expect the solution $\u$ to satisfy
\begin{equation}
\abs{\u(\x)} \leq C\abs{\x}^{-1}(1 + s(\x))^{-1}
\end{equation}
for $\abs{\x}$ sufficiently large. However we are only able to prove
\begin{equation}
\abs{\u(\x)} \leq C\abs{\x}^{-1+\ep}(1 + s(\x))^{-1+\ep}
\end{equation}
for arbitrarily small $\ep$. This is due to the presence of the linear term $\Re\We\frac{\de^2\u}{\de x_1^2}$ on the right-hand side of the transport equation (\ref{transport}). This implies that the solution to the transport equation has the same decay as $\nabla^2\w$ (quadratic terms decay faster). Moreover, for the Oseen system previously available estimates for the second gradient (which were using techniques of fundamental solution) lose logarithmic factor in the weight in the $L^\infty$ norm, $L^p$ estimates lose $\ep$ in the weight, see \cite{PoPhd}. Thus fixed point theorem argument would not work. Fortunately due to recent results of Koch \cite{Koch} we have at least $L^p$ estimates without mentioned $\ep$ loss in the weight and therefore fixed point argument works.

It is worth mentioning that in \cite{NoPo}, where the model of viscoelastic fluid is considered, authors are able to overcome these problems by introducing modified Oseen problem with the problematic term $\frac{\de^2\u}{\de x_1}$ being included in the Oseen operator. Then all terms on the right hand side are quadratic and thus with better decay. In our problem this cannot be repeated since this linear term appears in the transport equation.

The drawback of using $L^p$ estimates is that in order to get $L^\infty$ estimate we have to use embedding theorems and thus we are able to prove 
\begin{equation*}
\abs{\u(\x)} \leq C\abs{\x}^{-1+\ep}(1 + s(\x))^{-1+\ep}
\end{equation*}
for arbitrarily small $\ep$. Details will be specified in the proper part of the proof.

\section{Preliminaries}
Throughout this paper we will use standard notation for the Lebesgue spaces $L^p(\Om)$ with the norm $\left\|\cdot\right\|_p$, the Sobolev spaces $W^{k,p}(\Om)$ with the norm $\left\|\cdot\right\|_{k,p}$ and the homogenous Sobolev spaces $D^{k,p}(\Om)$ with the norm $\abs{\cdot}_{k,p}$. Let $g \in L^1_{loc}(\Om)$ be a nonnegative weight. Then $L^p(\Om,g)$ denotes the weighted $L^p$ space with the norm
\begin{equation*}
\left\|u\right\|_{p,(g)} = \left\|ug\right\|_p
\end{equation*}
for any $p \in [1,\infty]$. Similarly, $W^{k,p}(\Om,g)$ denotes the weighted Sobolev space with the norm 
\begin{equation*}
\left\|u\right\|_{k,p,(g)} = \left\|ug\right\|_{k,p}.
\end{equation*}
Note that if there is no confusion we sometimes omit writing the domain and instead of $L^p(\Om,g)$ we write simply $L^p(g)$.

As we decomposed the original problem into an Oseen problem and a steady transport equation, we shall mention several classical results about these problems in three dimensional exterior domains. Let us start with the Oseen problem (\ref{Oseen}).

\subsection{Oseen problem. Existence and properties}

We denote by $(\bO,\e)$ the fundamental solution to the Oseen problem. It can be shown (see for example \cite{PoPhd}) that
\begin{equation}
\e(\x) = \nabla\E(\x),
\end{equation}
where $\E(\x)$ is the fundamental solution to the Laplace equation. The tensor $\bO(\x,\Re)$ (here $\Re$ denotes the constant standing in front of $\frac{\de\u}{\de x_1}$ in the equation) satisfies the following property
\begin{equation}
\bO(\x,\Re) = \Re\bO(\Re\x,1)
\end{equation}
and therefore it is sufficient to study the tensor $\bO(\x,1)$. For $\abs{\x} \sil \infty$ we have
\begin{equation}
\begin{split}
\bO(\x,1) &\sim \abs{\x}^{-1}(1 + s(\x))^{-1} \\
D^\alpha\bO(\x,1) &\sim \abs{\x}^{-1-\frac{\abs{\alpha}}{2}}(1 + s(\x))^{-1-\frac{\abs{\alpha}}{2}} \\
D^\alpha\bO(\x,1) &\sim \abs{\x}^{-1-\alpha_1-\frac{\abs{\alpha}-\alpha_1}{2}}(1 + s(\x))^{-1-\frac{\abs{\alpha}-\alpha_1}{2}},
\end{split}
\end{equation}
i.e. the derivatives with respect to the first variable decay faster.

Next, we present some results for the general Oseen problem
\begin{equation}\label{basicOseen}
\begin{split}
\displaystyle -\Delta\u + \Re\frac{\de\u}{\de x_1} + \nabla P &= \f = \div \bG \quad\text{ in } \Om\\
\div \u &= 0 \quad\text{ in } \Om \\
\u &= \u_* \quad\text{ on } \de\Om \\
\u &\sil \bnul \quad\text{ as } \abs{\x} \sil \infty,
\end{split}
\end{equation}
where $\Om$ is an exterior domain.

The proof of the following classical theorem can be found in \cite{Ga} or in \cite{PoPhd}.

\begin{theo}\label{theo1}
Let $\Om \subset \rtri$ be an exterior domain of class $C^{k+2}$. Let $\f \in D_0^{-1,q}(\Om) \cap W^{k,2}(\Om)$, $\u_* \in W^{k+\frac{3}{2},2}(\de\Om)$, $q \in (\frac{3}{2},3)$, $k \geq 0$. Then there exists exactly one q-weak solution (i.e. weak solution such that $\u \in W^{1,q}(\Om)$) to (\ref{basicOseen}). Moreover
\begin{equation}
\u \in L^{\frac{4q}{4-q}}(\Om) \quad \text{and} \quad \nabla\u, P \in L^q(\Om) \cap W^{k+1,2}(\Om)
\end{equation}
and
\begin{multline}
a_2\left\|\u\right\|_\frac{4q}{4-q} + \left|\u\right|_{1,q} + \left\|\nabla\u\right\|_{k+1,2} + \left\|P\right\|_q + \left\|P\right\|_{k+1,2} \leq \\
\leq C(\left|\f\right|_{-1,q} + \left\|\f\right\|_{k,2} + \left\|\u_*\right\|_{k+\frac{3}{2},2,\de\Om}),
\end{multline}
where for $\Re \in (0,\Re_0]$ the constant $C = C(k,q,\Om,\Re_0)$ and $a_2 = \min\{1,\Re^\frac{1}{4}\}$.
\end{theo}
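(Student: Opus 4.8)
The plan is to reduce the exterior boundary value problem to the combination of a whole-space Oseen problem, handled through the explicit fundamental tensor $\bO$, and an auxiliary problem on a bounded domain, where the Oseen term $\Re\,\de\u/\de x_1$ is merely a lower-order perturbation of the Stokes operator. Since the statement is classical, I would follow the potential-theoretic route: first solve in $\rtri$, then localize to accommodate $\de\Om$ and the datum $\u_*$.

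First I would treat the whole-space problem $-\Delta\u+\Re\,\de\u/\de x_1+\nabla P=\div\bG$, $\div\u=0$ in $\rtri$, representing the solution as a convolution against $(\bO,\e)$. Because $\f=\div\bG$, an integration by parts transfers one derivative onto the kernel, so $\nabla\u$ and $P$ are given by singular integral operators acting on $\bG$; the Mikhlin--Calder\'on--Zygmund theory then yields $\|\nabla\u\|_q+\|P\|_q\le C\|\bG\|_q$ for $q\in(1,\infty)$. The velocity itself is controlled via the anisotropic decay $\bO(\x,1)\sim|\x|^{-1}(1+s(\x))^{-1}$: a Hardy--Littlewood--Sobolev / weak-type estimate in the associated anisotropic Lorentz class produces $\u\in L^{4q/(4-q)}$, which is exactly where both the exponent $4q/(4-q)$ and the restriction $q\in(\frac32,3)$ (reflecting the summability ranges of $\bO$ and $\nabla\bO$) originate. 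The uniform-in-$\Re$ constants, in particular the factor $a_2=\min\{1,\Re^{1/4}\}$, I would extract from the scaling identity $\bO(\x,\Re)=\Re\,\bO(\Re\x,1)$ by the change of variables $\x\mapsto\Re\x$, tracking the powers of $\Re$ generated by each norm.

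Next I would handle the boundary by a cutoff splitting $\u=\psi\u_b+(1-\psi)\u_e$, where $\u_e$ solves the whole-space problem above (with $\bG$ extended by zero) and $\u_b$ solves an Oseen problem on a bounded domain surrounding $\de\Om$ and carrying the datum $\u_*$. On the bounded piece the term $\Re\,\de_1\u_b$ is compact relative to $-\Delta$, so the Stokes theory of Cattabriga / Agmon--Douglis--Nirenberg applies and gives the $W^{k+1,2}$ bounds for $\u_b,P_b$ in terms of $\u_*\in W^{k+3/2,2}(\de\Om)$ and $\f\in W^{k,2}$. Gluing produces commutator terms $[\Delta,\psi]$, $[\de_1,\psi]$ supported in a bounded annulus, hence lower order, and it spoils incompressibility; I would restore $\div\u=0$ by subtracting a Bogovskii corrector $\w$ solving $\div\w=-(\nabla\psi)\cdot(\u_b-\u_e)$ on the annulus, at the cost of a controllable lower-order term. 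The higher regularity $\nabla\u,P\in W^{k+1,2}$ then follows from interior and boundary elliptic regularity for the elliptic operator $-\Delta+\Re\,\de_1$ via a difference-quotient bootstrap.

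The main obstacle, and the step meriting the most care, is keeping every constant uniform in $\Re\in(0,\Re_0]$ while simultaneously capturing the sharp anisotropic summability encoded in $a_2\|\u\|_{4q/(4-q)}$ and in the admissible range $q\in(\frac32,3)$; both features rest entirely on the anisotropic structure and scaling of $\bO$, so the delicate work lies in the whole-space analysis rather than in the localization. Uniqueness I would settle last: for the homogeneous problem ($\bG=\bnul$, $\u_*=\bnul$) testing with $\u$ kills the Oseen term (since $\int\de_1|\u|^2=0$ by incompressibility, decay, and $\u=\bnul$ on $\de\Om$) and forces $\nabla\u=\bnul$, whence $\u\equiv\bnul$ by the decay at infinity; a duality argument then extends uniqueness across the full range of $q$ without assuming extra integrability.
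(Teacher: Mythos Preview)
The paper does not give its own proof of this theorem: immediately before the statement it writes ``The proof of the following classical theorem can be found in \cite{Ga} or in \cite{PoPhd}'' and moves on. So there is nothing in the paper to compare your argument against line by line.

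That said, your outline is precisely the classical route taken in those references (Galdi's monograph in particular): solve the whole-space Oseen problem by convolution with $(\bO,\e)$ and Calder\'on--Zygmund theory, obtain the anisotropic Sobolev-type inclusion $\u\in L^{4q/(4-q)}$ from the summability of $\bO$ and the scaling $\bO(\x,\Re)=\Re\,\bO(\Re\x,1)$ (this is exactly where $a_2=\min\{1,\Re^{1/4}\}$ comes from), then localize near $\de\Om$ with a cutoff, repair the divergence with a Bogovski\u{\i} corrector on the annulus, and bootstrap higher regularity from the Stokes/ADN theory on bounded domains. Uniqueness via the energy identity is also the standard step. Your sketch is sound and matches the cited literature; the only point I would flag is that the uniformity in $\Re$ through the localization step (the bounded-domain Oseen problem and the commutator/Bogovski\u{\i} terms) needs a word of justification, since a naive perturbation argument around Stokes could in principle introduce $\Re$-dependent constants---in Galdi this is handled by treating $\Re\,\de_1\u$ as a compact perturbation with operator norm $O(\Re)$ on the bounded piece, exactly as you indicate.
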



We need the following integral representation of solutions to (\ref{basicOseen}) to obtain weighted estimates.

Let us denote
\begin{equation}\label{tt}
\begin{split}
T_{ij}(\u,P) &= \frac{\de u_i}{\de x_j} + \frac{\de u_j}{\de x_i} - P\delta_{ij} \\
\te_{ij}(\e) &= \frac{\de e_i}{\de x_j} + \frac{\de e_j}{\de x_i} + \Re e_1\delta_{ij}
\end{split}
\end{equation}

\begin{theo}
Let $\Om \in C^2$ be an exterior domain, $\bG \in C_0^\infty(\Ombar)$ and $(\u,P)$ be the unique solution to (\ref{basicOseen}). Let $\T$ be defined in (\ref{tt}) and $(\bO,\e)$ be the fundamental solution to the Oseen problem. Then
\begin{multline}\label{int0}
u_j(\x) = \int_\Om \frac{\de}{\de x_k} \O_{ij}(\x-\y,\Re)\G_{ik}(\y)\d\y + \\
+ \int_{\de\Om}\left[-\Re\O_{ij}(\x-\y,\Re)u_i(\y)\delta_{1k} + u_i(\y)T_{ik}(\bO_{\cdot j},e_j)(\x-\y,\Re) \right. + \\
+ \left.\O_{ij}(\x-\y,\Re)T_{ik}(\u,P)(\y) + \O_{ij}(\x-\y,\Re)\G_{ik}(\y)\right]n_k(\y)\d S
\end{multline}
\begin{multline}\label{int1}
D^\alpha u_j(\x) = - \int_\Om D^\alpha \frac{\de}{\de x_k} \O_{ij}(\x-\y,\Re)\G_{ik}(\y)\d\y + \\
+ \int_{\de\Om}\left[-\Re D^\alpha\O_{ij}(\x-\y,\Re)u_i(\y)\delta_{1k} + u_i(\y)D^\alpha T_{ik}(\bO_{\cdot j},e_j)(\x-\y,\Re) \right. + \\
+ \left.D^\alpha \O_{ij}(\x-\y,\Re)T_{ik}(\u,P)(\y) + D^\alpha\O_{ij}(\x-\y,\Re)\G_{ik}(\y)\right]n_k(\y)\d S
\end{multline}
for $\abs{\alpha} = 1$ and
\begin{multline}\label{int2}
D^\alpha u_j(\x) = v.p.\int_\Om D^\alpha \O_{ij}(\x-\y,\Re)\frac{\de}{\de y_k}\G_{ik}(\y)\d\y + c_{ij\alpha_1\alpha_2}\frac{\de \G_{ik}}{\de x_k}(\x) + \\
+ \int_{\de\Om}\left[-\Re D^\alpha\O_{ij}(\x-\y,\Re)u_i(\y)\delta_{1k} + u_i(\y)D^\alpha T_{ik}(\bO_{\cdot j},e_j)(\x-\y,\Re) \right. + \\
+ \left.D^\alpha \O_{ij}(\x-\y,\Re)T_{ik}(\u,P)(\y)\right]n_k(\y)\d S
\end{multline}
for $\abs{\alpha} = 2$. 
\end{theo}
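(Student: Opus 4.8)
The plan is to derive (\ref{int0}) from a Green-type identity pairing the solution $\u$ against the fundamental tensor $\bO_{\cdot j}(\x-\cdot)$, and then to obtain the gradient formulas by differentiating in $\x$. First I would record that, as a function of its spatial argument, the $j$-th column $\bO_{\cdot j}$ together with the pressure $e_j$ solves the Oseen system with a Dirac source in the $j$-th coordinate direction and is solenoidal, $\de_i\O_{ij}=0$. Since the transport term $\Re\de_1$ is not self-adjoint, the adjoint Oseen operator carries $-\Re\de_1$, and the natural boundary flux associated with it is the adjoint stress $\te_{ij}$ of (\ref{tt}); in the representation this transport contribution is peeled off as the explicit term $-\Re\O_{ij}u_i\delta_{1k}$, leaving the Stokes-type fluxes $T_{ik}(\u,P)$ and $T_{ik}(\bO_{\cdot j},e_j)$. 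This is why the boundary integrals take exactly the displayed form.

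Next I would fix $\x\in\Om$ and work on the truncated punctured domain $\Om_{R,\ep}=(\Om\cap B_R(\bnul))\setminus B_\ep(\x)$, on which both $\u$ and $\bO_{\cdot j}(\x-\cdot)$ are smooth. Multiplying $-\Delta u_i+\Re\de_1 u_i+\de_i P=\de_k\G_{ik}$ by $\O_{ij}(\x-\y)$, multiplying the Oseen equation for $\bO_{\cdot j}$ by $u_i$, subtracting, and integrating over $\Om_{R,\ep}$, the interior terms cancel by means of the two divergence-free constraints and two integrations by parts, except for the contribution carrying $\bG$; the second-order and pressure fluxes reorganize on $\de(\Om_{R,\ep})$ into the stress tensors above, and the transport term produces the $-\Re\O_{ij}u_i\delta_{1k}$ flux. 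This yields (\ref{int0}) with boundary integrals over $\de\Om$, $\de B_R(\bnul)$ and $\de B_\ep(\x)$. I would then pass to the limits: as $\ep\to0$ the integral over $\de B_\ep(\x)$ reproduces $u_j(\x)$ by the standard singularity analysis, since the $\abs{\x}^{-1}$ behaviour of $\bO$ against a surface measure of order $\ep^2$, and the $\abs{\x}^{-2}$ behaviour of its gradient against $u_i(\y)\to u_i(\x)$, together recover the Dirac mass. As $R\to\infty$ the integrals over $\de B_R(\bnul)$ vanish, using the decay of $\u$, $\nabla\u$ and $P$ from Theorem \ref{theo1} ($\u\in L^{\frac{4q}{4-q}}$, $\nabla\u,P\in L^q$) matched against the decay of $\bO$ and its derivatives, while the compact support of $\bG$ makes the volume integral converge over all of $\Om$.

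Finally, the derivative formulas follow by differentiating (\ref{int0}) in $\x$. Writing the volume contribution in the equivalent form $\int_\Om \O_{ij}(\x-\y)f_i(\y)\,\d\y$ with $f_i=\de_k\G_{ik}$ makes the singularity count transparent: the kernel $\O$ behaves like $\abs{\z}^{-1}$ near the origin and each spatial derivative raises the order by one. Thus for $\abs{\alpha}=1$ the kernel $D^\alpha\O$ is of order $\abs{\z}^{-2}$, still locally integrable in $\rtri$, so differentiation passes under the integral and one recovers (\ref{int1}) as an ordinary integral with no local term.

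The genuine difficulty is the case $\abs{\alpha}=2$, where the kernel $D^\alpha\O$ reaches order $\abs{\z}^{-3}$, a nonintegrable Calder\'on--Zygmund singular kernel, so differentiation under the integral is no longer licit. Here I would again excise $B_\ep(\x)$, differentiate on $\Om\setminus B_\ep(\x)$, and let $\ep\to0$: the principal value exists because the homogeneous degree $-3$ principal part of $D^\alpha\O$ has vanishing mean over the unit sphere, while the limit of the boundary integral over $\de B_\ep(\x)$ produces the local term $c_{ij\alpha_1\alpha_2}\de_k\G_{ik}(\x)$. Extracting this principal value and pinning down the constant $c_{ij\alpha_1\alpha_2}$ is where the main care is required, and it is also the point at which the anisotropic decay recorded for $\bO$ — faster in $x_1$ than in the transverse directions — must be respected. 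This establishes (\ref{int2}).
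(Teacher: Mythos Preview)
The paper does not supply its own proof of this theorem; immediately after the companion pressure-representation theorem it simply states that ``the proof of these representation formulas can be found in \cite{Ga} or in \cite{PoPhd}.'' Your sketch --- Green's identity for the pair $(\u,\bO_{\cdot j})$ on the truncated punctured domain $\Om_{R,\ep}$, the limit $\ep\to0$ recovering $u_j(\x)$ from the local singularity of $\bO$, the limit $R\to\infty$ using the decay supplied by Theorem~\ref{theo1}, and then differentiation in $\x$ with a Calder\'on--Zygmund principal-value argument for $\abs{\alpha}=2$ --- is precisely the classical route taken in those references, so your proposal is correct and matches the intended argument.

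One small correction: in your last paragraph you say that the anisotropic decay of $\bO$ ``must be respected'' when extracting the principal value and identifying the constant $c_{ij\alpha_1\alpha_2}$. This is not where the anisotropy enters. Near the origin the Oseen tensor coincides at leading order with the Stokes fundamental tensor, which is isotropic of order $\abs{\z}^{-1}$; hence the principal-value structure and the local constant are determined by an isotropic degree $-3$ kernel, exactly as for Stokes. The anisotropic far-field behaviour of $\bO$ and its derivatives is relevant only for showing that the boundary integrals over $\de B_R(\bnul)$ vanish as $R\to\infty$, and later in the paper for the weighted $L^p$ estimates.
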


\begin{rem}
The integral representation formulas hold for much larger classes of functions. For example it holds for a.a. $\x \in \Om$ if $\u \in W^{2,q}_{loc}(\Ombar)$ and $P \in W^{1,q}_{loc}(\Ombar)$ for some $q \in (1,\infty)$ and
\begin{itemize}
\item (\ref{int0}) if $\bG \in L^q(\Om)$ and $\div \bG \in L^r_{loc}(\Ombar)$ for $q \in (1,4)$, $r \in (1,\infty)$,
\item (\ref{int1}) if $\bG \in L^q(\Om)$ and $\div \bG \in L^r_{loc}(\Ombar)$ for $q,r \in (1,\infty)$,
\item (\ref{int2}) if $\div \bG \in L^r_{loc}(\Ombar)$ for $r \in (1,\infty)$.
\end{itemize}
\end{rem}

For the pressure we have also integral representation formulas.
\begin{theo}
Let $\Om \in C^2$ be an exterior domain, $\bG \in C_0^\infty(\Ombar)$ and $(\u,P)$ be the unique solution to (\ref{basicOseen}). Let $T_{ij}$ and $\te_{ij}$ be defined in (\ref{tt}). Then
\begin{multline}\label{intp0}
P(\x) = v.p.\int_\Om \frac{\de}{\de x_k} e_i(\x-\y)\G_{ik}(\y)\d\y + c_{ik}\G_{ik}(\x) + \\
+ \int_{\de\Om}\left[-\Re e_i(\x-\y)u_i(\y)\delta_{1l} + u_i(\y)\te_{il}(\e)(\x-\y) \right. + \\
+ \left.e_i(\x-\y)T_{il}(\u,P)(\y) + e_i(\x-\y)\G_{il}(\y)\right]n_l(\y)\d S
\end{multline}
\begin{multline}\label{intp1}
D^\alpha P(\x) = v.p.\int_\Om D^\alpha e_i(\x-\y)\frac{\de}{\de y_k}\G_{ik}(\y)\d\y + c_{ik}\frac{\de}{\de x_k}\G_{ik}(\x) + \\
+ \int_{\de\Om}\left[-\Re D^\alpha e_i(\x-\y)u_i(\y)\delta_{1l} + u_i(\y)D^\alpha \te_{il}(\e)(\x-\y) \right. + \\
+ \left.D^\alpha e_i(\x-\y)T_{il}(\u,P)(\y)\right]n_l(\y)\d S
\end{multline}
for $\abs{\alpha} = 1$. 
\end{theo}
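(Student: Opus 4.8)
The plan is to obtain both formulas from the classical Green (reciprocity) identity for the Oseen system, in complete parallel with the velocity representations (\ref{int0})--(\ref{int2}) of the preceding theorem; the only structural change is that the solution $(\u,P)$ is now paired with the pressure component $\e$ of the fundamental solution in place of its velocity component $\bO_{\cdot j}$, so that the fundamental-solution stress $T_{ik}(\bO_{\cdot j},e_j)$ is replaced by $\te_{il}(\e)$. First I would record the two facts that drive the derivation: since $\div\u=0$, taking the divergence of the momentum equation in (\ref{basicOseen}) gives $\Delta P=\div\f=\div\div\bG$, so the pressure is governed by the Laplacian; and $\e=\nabla\E$ with $\Delta e_i=\de_i\delta$ in the sense of distributions. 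The definitions in (\ref{tt}) are tailored precisely so that, after integrating by parts, the boundary integrand collapses into the combinations $u_i\te_{il}(\e)$ and $e_iT_{il}(\u,P)$, together with the Oseen convective contribution $-\Re e_iu_i\delta_{1l}$ and the forcing term $e_i\G_{il}$.

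Next I would run the identity on the punctured, truncated domain $\Om_{R,\delta}=(\Om\cap B_R(\bnul))\setminus\overline{B_\delta(\x)}$, move every derivative onto the fundamental solution, and then pass to the two limits. Letting $\delta\to0$ isolates the point value: the contribution of the inner sphere $\de B_\delta(\x)$ converges to $P(\x)$, while the strongly singular forcing kernel produces, in addition, the Calder\'on--Zygmund correction term $c_{ik}\G_{ik}(\x)$, the constant $c_{ik}$ being the average over the unit sphere of the kernel $\de_ke_i$ (detailed below). Letting $R\to\infty$ annihilates the outer contributions: $\bG\in C_0^\infty(\Ombar)$ has compact support, $e_i=\de_i\E\sim\abs{\x}^{-2}$ and $\de_ke_i\sim\abs{\x}^{-3}$ decay, and $\u,\nabla\u,P$ decay at infinity by Theorem \ref{theo1}, so the surface integrals over $\de B_R(\bnul)$ vanish.

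For the forcing volume integral I would integrate by parts twice using $\f=\div\bG$: the first integration moves $\de_k$ off $\bG$ and yields the boundary term $e_i\G_{il}n_l$ together with a volume term whose kernel is $\de_ke_i=\de_k\de_i\E\sim\abs{\x}^{-3}$; since this kernel is not locally integrable in $\rtri$, the remaining volume integral must be read as a principal value, which is exactly $v.p.\int_\Om\de_ke_i(\x-\y)\G_{ik}(\y)\d\y$. Combining this with the boundary terms assembled above gives (\ref{intp0}). For (\ref{intp1}) I would differentiate (\ref{intp0}) in $\x$ and invoke the standard rule for differentiating a singular integral of Calder\'on--Zygmund type: the derivative of the principal-value integral equals $v.p.\int_\Om D^\alpha e_i(\x-\y)\frac{\de}{\de y_k}\G_{ik}(\y)\d\y$ (the surviving derivative transferred onto $\bG$) plus the jump term $c_{ik}\frac{\de}{\de x_k}\G_{ik}(\x)$, while the boundary integrals, with $\x\notin\de\Om$, may be differentiated under the integral sign. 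Finally I would drop the assumption $\bG\in C_0^\infty(\Ombar)$ by the density/approximation argument indicated in the following remark, thereby reaching the stated $L^q$, $L^r_{loc}$ classes.

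The principal difficulty is the singular-integral analysis rather than the algebra. One must rigorously justify the principal value, the $\delta\to0$ extraction of $P(\x)$, and the precise correction constants $c_{ik}$ (and $c_{ik}\de_k\G_{ik}(\x)$ in the gradient case), all while simultaneously controlling the outer boundary as $R\to\infty$ through the anisotropic Oseen decay of $\bO$ and $\e$ rather than isotropic bounds. The second delicate point is purely organizational: one must verify that the boundary integrands genuinely collapse into the groupings $\te_{il}(\e)$ and $T_{il}(\u,P)$, with the correct convective terms $\Re e_1\delta_{il}$ and $-\Re u_i\delta_{1l}$; this is where the specific form of the definitions (\ref{tt}) and of the Oseen operator is used.
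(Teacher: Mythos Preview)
The paper does not give its own proof of this theorem; it simply cites Galdi's monograph \cite{Ga} and Pokorn\'y's thesis \cite{PoPhd} with the sentence ``The proof of these representation formulas can be found in \cite{Ga} or in \cite{PoPhd}.'' Your sketch---Green's reciprocity identity on the punctured truncated domain $\Om_{R,\delta}$, extraction of $P(\x)$ as $\delta\to0$ together with the Calder\'on--Zygmund correction $c_{ik}\G_{ik}(\x)$, vanishing of the outer boundary as $R\to\infty$, and differentiation of the singular integral for the gradient formula---is precisely the classical derivation carried out in those references, so your approach is correct and coincides with the (externally cited) proof.
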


\begin{rem}
The integral representation formulas for pressure hold also for much larger classes of functions. For example it holds for a.a. $\x \in \Om$ if $\u \in W^{2,q}_{loc}(\Ombar)$ and $P \in W^{1,q}_{loc}(\Ombar)$ for some $q \in (1,\infty)$ and
\begin{itemize}
\item (\ref{intp0}) if $\bG \in L^q(\Om)$ and $\div \bG \in L^r_{loc}(\Ombar)$ for $q,r \in (1,\infty)$,
\item (\ref{intp1}) if $\div \bG \in L^r_{loc}(\Ombar)$ for $r \in (1,\infty)$.
\end{itemize}
\end{rem}

The proof of these representation formulas can be found in \cite{Ga} or in \cite{PoPhd}.

\subsection{Results for weighted spaces}

We introduce the following weights which will be useful in studying the asymptotic structure of solutions
\begin{equation}\label{weights}
\begin{split}
\eta^A_B(\x) &= (1+\abs{\x})^A(1+s(\x))^B \\
\nu^A_B(\x) &= \abs{\x}^A(1+s(\x))^B \\
\mu^{A,\om}_B(\x) &= \eta^{A-\om}_B(\x)\nu^\om_0(\x) \\
\eta^A_B(\x,\Re) &= (1+\abs{\Re\x})^A(1+s(\Re\x))^B \\
\nu^A_B(\x,\Re) &= \abs{\x}^A(1+s(\Re\x))^B \\
\mu^{A,\om}_B(\x,\Re) &= \eta^{A-\om}_B(\x,\Re)\nu^\om_0(\x,\Re) 
\end{split}
\end{equation}

We recall that weight $g$ belongs to the class $A_p$ if there exists a constant $C$ such that
\begin{equation}
\sup_Q \left[\left(\frac{1}{\abs{Q}}\int_Q g^p(\x)\d\x\right)\left(\frac{1}{\abs{Q}}\int_Q g^{-\frac{p}{p-1}}(\x)\d\x\right)^{p-1}\right] \leq C < \infty,
\end{equation}
where the supremum is taken over all cubes $Q \subset \rtri$.

A basic property of the weight $\eta_{-b}^{-a}$ is the following condition
\begin{equation}\label{integrationeta}
\int_{\rtri} \eta_{-b}^{-a}(\x)\d\x < \infty \quad \Leftrightarrow \quad a + \min\{1,b\} > 3.
\end{equation}
For a proof of this and further properties we refer the reader to \cite{PoPhd}.

One of the main tools we use is the following theorem due to Koch \cite{Koch}. We should mention that this estimate is an essential improvement of what has been known about weighted estimates of the solutions to the Oseen system. Without this result it was impossible to obtain weighted estimates for the model of the second grade fluid or for Maxwell and Oldroyd-type fluids, see \cite{PoPhd}.
\begin{theo}\label{Koch}
\textbf{(Koch)} Let $T$ be an integral operator with the kernel $\frac{\de^2}{\de x_i \de x_j}\bO$ on $\rtri$. Then the following estimate hold:
\begin{equation}
\left\|T f\right\|_{p,(g),\rtri} \leq C \left\|f\right\|_{p,(g),\rtri}
\end{equation}
for $p \in (1,\infty)$ and $g = \eta_B^A(\x)$ for $A,B$ satisfying
\begin{equation}\label{AB}
\begin{split}
A, B \in \left(-\frac{1}{p},\frac{2(p-1)}{p}\right) \\
A + B > -\frac{1}{p} \\
2A - B, 2B - A < \frac{2(p-1)+1}{p} 
\end{split}
\end{equation}
\end{theo}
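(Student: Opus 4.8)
The plan is to read this as a weighted norm inequality for a Calder\'on--Zygmund singular integral, but carried out in the \emph{anisotropic} geometry dictated by the Oseen kernel rather than the Euclidean one. Using the scaling identity $\bO(\x,\Re)=\Re\,\bO(\Re\x,1)$ one reduces to $\Re=1$ and works with the convolution kernel $K=\frac{\de^2}{\de x_i\de x_j}\bO(\cdot,1)$. The first step is to record pointwise kernel estimates: near the origin $\bO$ is Stokes-like, so $|K(\x)|\lesssim|\x|^{-3}$ and $|\nabla K(\x)|\lesssim|\x|^{-4}$, a genuine singularity of standard type; at infinity the displayed asymptotics for $D^\alpha\bO$ with $|\alpha|=2$ give anisotropic decay in $|\x|$ and $s(\x)=|\x|-x_1$. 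These estimates show that $K$ is a standard kernel with respect to the quasi-metric $d$ whose balls are the parabolic, wake-adapted ellipsoids attached to the dilations $x_1\mapsto\lambda^2x_1$, $(x_2,x_3)\mapsto\lambda(x_2,x_3)$ that quasi-preserve the operator $-\Delta+\de_1$. One checks by direct computation that $(\rtri,d,\d\x)$ is a space of homogeneous type.

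Next I would establish unweighted boundedness. The Fourier symbol of $T$ is, after reduction to $\Re=1$, of the form $-\frac{\xi_i\xi_j}{|\xi|^2+i\xi_1}\bigl(\delta_{kl}-\frac{\xi_k\xi_l}{|\xi|^2}\bigr)$; it is bounded and behaves at high frequency like a product of Riesz transforms, and one verifies the (anisotropic) Mikhlin--H\"ormander derivative bounds. This yields $\left\|Tf\right\|_p\le C\left\|f\right\|_p$ for $p\in(1,\infty)$ and in particular the $L^2$ baseline needed to run the Calder\'on--Zygmund machinery on $(\rtri,d)$.

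With the kernel estimates and the unweighted bound in hand, the weighted inequality follows from Coifman--Fefferman theory on spaces of homogeneous type: $T$ satisfies the good-$\lambda$ inequality against the maximal operator $M_d$ built from the balls of $d$, whence $\left\|Tf\right\|_{p,(g)}\le C\left\|M_d f\right\|_{p,(g)}\le C\left\|f\right\|_{p,(g)}$ for every $g$ in the Muckenhoupt class $A_p(d)$; alternatively one proves the $A_2(d)$ case and extrapolates. It then remains to identify when $g=\eta^A_B\in A_p(d)$, and this is exactly the content of (\ref{AB}): evaluating the $A_p$ average of $\eta^A_B$ and of its dual power $-p/(p-1)$ over the anisotropic balls forces $A,B\in(-\tfrac1p,\tfrac{2(p-1)}p)$; joint integrability near the wake axis forces $A+B>-\tfrac1p$; and the compatibility between the two scales (the factor $|\x|$ scales like $\lambda$ while $s(\x)$ scales like $\lambda^2$) forces the mixed conditions $2A-B,\,2B-A<\tfrac{2(p-1)+1}p$.

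The main obstacle is not the abstract scheme but the anisotropy itself. One must choose the quasi-metric so that $K$ is simultaneously a standard kernel at the origin and respects the wake decay at infinity, verify the doubling and the good-$\lambda$ estimate in that geometry, and then carry out the sharp Muckenhoupt computation matching the three lines of (\ref{AB}) rather than the isotropic condition. The delicate point is the pair $2A-B,\,2B-A<\frac{2(p-1)+1}p$: these encode the parabolic relation $s(\x)\sim(x_2^2+x_3^2)/x_1$ inside the wake, so that balls elongated along $x_1$ are weighted differently by the two factors of $\eta^A_B$, and pinning down the endpoint of the admissible range requires tracking this scaling precisely.
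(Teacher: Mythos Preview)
The paper does not give its own proof of this theorem: immediately after the statement it says ``The proof of this theorem can be found in \cite{Koch},'' and then only notes that Koch's argument, though stated for $p=2$, works for all $p\in(1,\infty)$. So there is nothing in the paper to compare your proposal against beyond that citation.

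That said, your outline is broadly in the right spirit and is plausibly what Koch does: treat $T$ as a singular integral of Calder\'on--Zygmund type in a geometry adapted to the parabolic wake structure, obtain the unweighted $L^p$ bound from the multiplier, and then pass to weighted estimates via the Muckenhoupt machinery on a space of homogeneous type, with the conditions in (\ref{AB}) emerging as the $A_p(d)$ range for $\eta^A_B$. Two cautions. First, the dilation $x_1\mapsto\lambda^2 x_1$, $(x_2,x_3)\mapsto\lambda(x_2,x_3)$ does \emph{not} preserve $-\Delta+\partial_1$ (it sends $\partial_1^2$ to $\lambda^{-4}\partial_1^2$), so ``quasi-preserve'' is doing a lot of work; the correct homogeneous structure is more subtle and has to be built from the actual kernel asymptotics rather than from a single group of dilations, and checking that $K$ is a standard kernel in that structure is where the real content lies. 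Second, the identification of (\ref{AB}) with the anisotropic $A_p$ condition is asserted rather than verified; the mixed inequalities $2A-B,\,2B-A<\tfrac{2(p-1)+1}{p}$ come out of a genuine computation over the anisotropic balls, and getting the sharp constants right is the delicate part of Koch's argument. Your proposal is a coherent plan, but it is a sketch of a nontrivial paper rather than a proof, and the paper under review rightly defers to \cite{Koch} for the details.
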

The proof of this theorem can be found in \cite{Koch}. Note that in \cite{Koch} the theorem is formulated only for the case $p=2$, nevertheless the proof is given for general $p \in (1,\infty)$. The same estimate holds also for the case of an exterior domain.

\begin{corr}
For the same operator $T$ as in Theorem \ref{Koch} we have
\begin{equation}
\left\|T f\right\|_{p,(g_1),\Om} \leq C\Re^{\om} \left\|f\right\|_{p,(g_2),\Om},
\end{equation}
where $g_1 = \mu_{B}^{A,\om}(\x,\Re)$, $g_2 = \mu_B^{A,2\om}(\x,\Re)$, $A,B$ satisfying (\ref{AB}) and $\om \in [0,\frac{A}{2})$. 
\end{corr}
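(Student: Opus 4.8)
The plan is to deduce the corollary from Theorem \ref{Koch} by a two-step reduction: first rescale $\x \mapsto \z = \Re\x$ in order to strip the parameter $\Re$ out of the \emph{shape} of the anisotropic weights and to turn the $\Re$-dependent kernel into the fixed kernel of Koch's operator, and then pass from the exterior domain $\Om$ to $\rtri$ by extending $f$ by zero. The point is that the homogeneity $\bO(\x,\Re) = \Re\bO(\Re\x,1)$ of the Oseen tensor is matched exactly by the built-in scaling of the weights $\mu_B^{A,\om}(\cdot,\Re)$ defined in (\ref{weights}), so that the two effects cancel up to explicit powers of $\Re$.

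Concretely, I would first record the scaling identities. Since $s(\Re\x) = \Re s(\x)$, the definitions (\ref{weights}) give $\mu_B^{A,\om}(\x,\Re) = \Re^{-\om}\mu_B^{A,\om}(\Re\x)$ and $\mu_B^{A,2\om}(\x,\Re) = \Re^{-2\om}\mu_B^{A,2\om}(\Re\x)$, the weights on the right being the $\Re$-free ones. On the operator side $\frac{\de^2}{\de x_i\de x_j}\bO(\x-\y,\Re) = \Re^3\big(\frac{\de^2}{\de z_i\de z_j}\bO\big)(\Re(\x-\y),1)$, so writing $\tilde f(\z) = f(\z/\Re)$ one gets $Tf(\x) = (\tilde T\tilde f)(\Re\x)$, where $\tilde T$ is the Koch operator (kernel $D^2\bO(\cdot,1)$) now posed on the rescaled domain $\Re\Om$. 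Changing variables $\z = \Re\x$ in both norms yields $\|Tf\|_{p,(g_1),\Om} = \Re^{-\om-3/p}\|\tilde T\tilde f\|_{p,(\mu_B^{A,\om}),\Re\Om}$ and $\|f\|_{p,(g_2),\Om} = \Re^{-2\om-3/p}\|\tilde f\|_{p,(\mu_B^{A,2\om}),\Re\Om}$. The two powers of $\Re$ differ by exactly $\Re^\om$, so the asserted inequality is equivalent to the scale-free bound $\|\tilde T\tilde f\|_{p,(\mu_B^{A,\om}),\Re\Om} \le C\|\tilde f\|_{p,(\mu_B^{A,2\om}),\Re\Om}$, uniformly in $\Re$.

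For this reduced inequality I would invoke Theorem \ref{Koch} on $\rtri$ with an admissible weight $\eta_B^{A'}$, after extending $\tilde f$ by zero outside $\Re\Om$ and comparing the $\mu$-weights with $\eta$-weights. On the output side the comparison is free, since $\mu_B^{A,\om}(\z) = \eta_B^A(\z)\big(|\z|/(1+|\z|)\big)^\om \le \eta_B^A(\z)$, and restricting the norm to $\Re\Om \subset \rtri$ only decreases it. On the input side one must pay for the degeneracy of $\nu_0^\om = |\z|^\om$ at the origin; here the geometry enters, because on $\Re\Om$ one has $|\z| \ge \kappa\Re$, so $|\z|$ is bounded below, and the constraint $\om \in [0,\tfrac{A}{2})$ keeps the shifted exponents $A-\om, A-2\om$ inside the admissible range (\ref{AB}), so that Koch's estimate applies to the relevant $\eta$-weight and the replacement of $(1+|\z|)$ by $|\z|$ is controlled. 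One then verifies (\ref{AB}) for the exponents actually used, checks that the weights in play are Muckenhoupt (so that Koch's proof, given for general $p \in (1,\infty)$, indeed applies), and collects the constants.

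The main obstacle is precisely this last weight comparison on the input side. After rescaling, the output is measured in the weaker weight $\mu_B^{A,\om}$ and the input in the stronger $\mu_B^{A,2\om}$, so the reduced estimate is genuinely a two-weight bound for the singular operator $\tilde T$: the two weights coincide in the far field but separate near the rescaled inner boundary $|\z| \sim \kappa\Re$, which is exactly where $\nu_0$ degenerates. Extracting the clean gain $\Re^\om$ therefore hinges on exploiting the far-field ratio $\mu_B^{A,\om}(\cdot,\Re)/\mu_B^{A,2\om}(\cdot,\Re) \sim \Re^\om$ while absorbing the fixed inner region (where $|\x| \ge \kappa$ keeps both weights comparable to constants) into $C$. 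The hypothesis $\om < A/2$, which forces $A - 2\om > 0$, is what makes the admissibility conditions (\ref{AB}) survive the shift of exponents, and is the reason the argument would not close for larger $\om$; handling this two-weight comparison uniformly in $\Re$ is the technical heart of the proof.
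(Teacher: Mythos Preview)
Your approach is the same as the paper's: the proof there also consists of exactly the two ingredients you identify, namely the pointwise equivalence $\mu_B^{A,\om}(\x)\sim\eta_B^A(\x)$ on $\Om$ (where $|\x|\ge\kappa$ makes the factor $|\x|^\om$ harmless, with constant $(1+1/\kappa)^\om$ independent of $\Re$) together with the rescaling $\nabla^2\bO(\x,\Re)=\Re^3\nabla^2\bO(\Re\x,1)$ to produce the power of $\Re$. The paper records these two facts in two sentences and stops, whereas you carry out the rescaling explicitly and isolate the resulting two-weight comparison on $\Re\Om$ as the delicate point; this is a more detailed presentation of the same argument rather than a different route.
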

\begin{proof}
First we observe that for all $\x \in \Om$ it holds
\begin{equation}
\mu_B^{A,\om}(\x) \leq \eta_B^A(\x) \leq (1+\frac{1}{\kappa})^{\om}\mu_B^{A,\om}(\x),
\end{equation}
where $\kappa$ was introduced earlier by condition $B_\kappa(\bnul) \subset \DD$ and $\Om = \rtri \setminus \DD$. In other words the weights $\eta^A_B(\x)$ and $\mu^{A,\om}_B(\x)$ are equivalent in $\Om$. The presence of the term $\Re^\om$ is an easy consequence of a rescaling argument, because $\nabla^2\bO(\x,\Re) = \Re^3\nabla^2\bO(\Re\x,1)$.
\end{proof}

The proofs of the following theorems can be found for example in \cite{KrNoPo} or in \cite{PoPhd}.

\begin{theo}\label{theo6}
Let $T$ be an integral operator with the kernel $\abs{\nabla\bO}$, $T : f \sil \abs{\nabla\bO} \ast f$ and $p \in (1,\infty)$. Then $T$ is a well defined continuous operator:
\begin{equation}
L^p(\rtri,\eta_B^{A+\frac{1}{2}}(\cdot,\Re)) \mapsto L^p(\rtri,\eta_B^A(\cdot,\Re))
\end{equation}
for $B \in (0,\frac{3}{2}-\frac{3}{2p})$, $A+B > -\frac{1}{p}$, $A < \frac{3}{2} - \frac{2}{p}$, $A - B < \frac{1}{2} - \frac{1}{p}$.
Moreover we have for $A,B$ specified above
\begin{equation}
\left\|\abs{\nabla\bO(\cdot,\Re)}\ast f\right\|_{p,\eta_B^A(\cdot,\Re),\rtri} \leq C\Re^{-1}\left\|f\right\|_{p,\eta_B^{A+\frac{1}{2}}(\cdot,\Re),\rtri}
\end{equation}
\end{theo}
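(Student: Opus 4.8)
The plan is to reduce everything to the case $\Re=1$ by scaling, and then to prove the resulting weighted convolution bound by a direct kernel estimate (Schur's test / generalized Young), since $\nabla\bO$ is not a Calderón--Zygmund kernel and Theorem \ref{Koch} does not apply here. For the reduction I would use the homogeneity $\nabla\bO(\x,\Re)=\Re^2\nabla\bO(\Re\x,1)$ together with the identity $\eta_B^A(\x,\Re)=\eta_B^A(\Re\x,1)$. Substituting $\x\mapsto\x/\Re$, $\y\mapsto\y/\Re$ in the convolution, the Jacobians from the two substitutions combine with the factor $\Re^2$ so that, after taking $p$-th roots, one recovers exactly the prefactor $\Re^{-1}$. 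Thus it suffices to show, at $\Re=1$, that
\[
\left\|\,|\nabla\bO(\cdot,1)|\ast f\,\right\|_{p,(\eta_B^A)}\le C\left\|f\right\|_{p,(\eta_B^{A+\frac12})}.
\]
Writing $g=\eta_B^{A+\frac12}f$, this is the plain $L^p(\rtri)$-boundedness of the integral operator with kernel
\[
\mathcal{K}(\x,\y)=\frac{\eta_B^A(\x)}{\eta_B^{A+\frac12}(\y)}\,|\nabla\bO(\x-\y,1)|.
\]

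Next I would insert the pointwise behaviour of the Oseen kernel: near the origin $|\nabla\bO(\x,1)|$ has the integrable (in $\rtri$) singularity of order $|\x|^{-2}$, while at infinity it obeys the anisotropic decay $|\nabla\bO(\x,1)|\le C\,\nu^{-\frac32}_{-\frac32}(\x)=C|\x|^{-\frac32}(1+s(\x))^{-\frac32}$ read off from the listed asymptotics. The heart of the argument is to transfer the weight quotient $\eta_B^A(\x)/\eta_B^{A+\frac12}(\y)$ onto the convolution variable $\x-\y$ by means of the quasi-triangle inequalities $1+|\x|\le C(1+|\x-\y|)(1+|\y|)$ and their counterparts for the wake function $s$. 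To make those inequalities usable I would split the $\y$-integration over $\rtri$ into the near-field of $\x$, the region where $|\x-\y|$ is comparable to $|\x|$, and the remainder, so that in each piece the dominant part of the weight can be absorbed either into the kernel factor or into a harmless bounded factor.

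On each region the goal is to bound the row integral $\int\mathcal{K}(\x,\y)\,\d\y$ and the column integral $\int\mathcal{K}(\x,\y)\,\d\x$ uniformly; once both are finite, the generalized Young (Schur) inequality yields $L^p$-boundedness for every $p\in(1,\infty)$. After the weight transfer these reduce to integrals of anisotropic weights of the form $\int\nu^a_b$ and $\int\eta^a_b$, whose convergence is governed by the criterion (\ref{integrationeta}) and its paraboloidal analogue. The four stated constraints are precisely what forces all these integrals to converge: $A<\frac32-\frac2p$ controls the radial decay, $A+B>-\frac1p$ the behaviour near the origin and along the axis, and the pair $B<\frac32-\frac{3}{2p}$, $A-B<\frac12-\frac1p$ the integration across and along the wake.

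I expect the main obstacle to be exactly this last, anisotropic, contribution. Unlike $(1+|\x|)$, the wake weight $(1+s(\x))$ does not enjoy a clean triangle inequality, so the region in which both $\x$ and $\y$ sit deep inside the paraboloid where $s$ stays bounded must be isolated and treated by hand; it is there that the sharp exponents $\frac32-\frac{3}{2p}$ and $\frac12-\frac1p$ are generated. By contrast, the local singularity $|\x|^{-2}$, which on its own is not $L^p$--$L^p$ bounded, causes no trouble: on the near-field the two weights are comparable, and $\int_{|\x-\y|<1}|\x-\y|^{-2}\,\d\y<\infty$ in $\rtri$, so the singular part has uniformly bounded row and column integrals and the remaining half-power weight gain absorbs the far-field decay.
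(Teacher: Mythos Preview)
The paper does not give its own proof of this theorem; immediately before the statement it simply refers the reader to \cite{KrNoPo} and \cite{PoPhd}. So there is no in-paper argument to compare against beyond that citation.

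Your reduction to $\Re=1$ via $\nabla\bO(\x,\Re)=\Re^2\nabla\bO(\Re\x,1)$ and the change of variables is correct and is exactly how the prefactor $\Re^{-1}$ appears. The pointwise bounds you quote for $|\nabla\bO|$ (local $|\x|^{-2}$ singularity, anisotropic $|\x|^{-3/2}(1+s(\x))^{-3/2}$ decay) and the idea of splitting into near/far/wake regions are also the ingredients used in \cite{KrNoPo} and \cite{PoPhd}.

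Where your outline drifts from what actually works is the appeal to the \emph{unweighted} Schur test. You write that once both $\int\mathcal K(\x,\y)\,\d\y$ and $\int\mathcal K(\x,\y)\,\d\x$ are uniformly bounded, $L^p$-boundedness follows for every $p\in(1,\infty)$. But those row/column integrals do not involve $p$ at all, so if that argument went through it would give conditions on $A,B$ independent of $p$ and in particular cover $p=1$. The ranges in the theorem, however, are genuinely $p$-dependent (for instance $B<\tfrac32-\tfrac{3}{2p}$ collapses to the empty set as $p\downarrow1$), so a plain Schur bound cannot reproduce them, and conversely for admissible $(A,B,p)$ the two integrals of $\mathcal K$ will \emph{not} both be finite in general. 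In \cite{KrNoPo} the estimate is obtained instead by a direct region-by-region argument in which, on each piece, one applies H\"older's inequality with exponents depending on $p$ before integrating; the stated thresholds $\tfrac32-\tfrac{2}{p}$, $\tfrac12-\tfrac1p$, $\tfrac32-\tfrac{3}{2p}$, $-\tfrac1p$ come out of precisely those H\"older splittings combined with (\ref{integrationeta}). Your decomposition and kernel bounds are the right raw material, but you should replace the Schur step by this $p$-dependent H\"older argument (or, equivalently, a Schur test with an auxiliary weight $h$ chosen as a suitable $p$-dependent power of $\eta$), otherwise the constraints you recover will not match the statement.
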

\begin{corr}
For the same operator $T$ as in Theorem \ref{theo6} one has
\begin{equation}
\left\|T f\right\|_{p,(g_1),\Om} \leq C\Re^{-1+\om} \left\|f\right\|_{p,(g_2),\Om},
\end{equation}
where $g_1 = \mu_{B}^{A,\om}(\x,\Re)$, $g_2 = \mu_B^{A,2\om}(\x,\Re)$, $A,B$ satisfy the assertions of Theorem \ref{theo6} and $\om \in [0,\frac{A}{2})$. 
\end{corr}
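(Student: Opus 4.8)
The plan is to derive this corollary from Theorem~\ref{theo6} by the same two devices used to pass from Koch's Theorem~\ref{Koch} to the corollary preceding it, the only genuinely new ingredient being the factor $\Re^{-1}$ that Theorem~\ref{theo6} already carries. These devices are the comparison of the anisotropic weights $\mu_B^{A,\om}$ and $\eta_B^A$ on the exterior domain $\Om$, and the scaling identity $\nabla\bO(\x,\Re)=\Re^2\nabla\bO(\Re\x,1)$ for the kernel, which reduces the $\Re$-dependent operator to the one at $\Re=1$. Since the kernel $|\nabla\bO|$ is nonnegative, extending $f$ by zero outside $\Om$ only enlarges $Tf$, so every norm over $\Om$ may be controlled by the corresponding norm over $\rtri$, where Theorem~\ref{theo6} is available.

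Concretely, first I would perform the substitution $\x\mapsto\Re\x$. Together with the kernel identity and the scaling relations of the weights (namely $\eta_B^A(\x,\Re)=\eta_B^A(\Re\x)$ and the analogous identity $\mu_B^{A,\om}(\x,\Re)=\Re^{-\om}\mu_B^{A,\om}(\Re\x)$ coming from the factor $\nu_0^\om$), this converts the left-hand side over $\Om$ carrying the $\Re$-dependent weight into an integral over the rescaled domain $\Re\Om$ carrying the plain weight $\mu_B^{A,\om}$. There the elementary pointwise bound $\mu_B^{A,\om}\le\eta_B^A$, valid on all of $\rtri$ with constant one, lets me replace $\mu_B^{A,\om}$ by $\eta_B^A$ and integrate over the whole of $\rtri$; then Theorem~\ref{theo6} at $\Re=1$ supplies the constant $C$ and trades the output weight $\eta_B^A$ for the input weight $\eta_B^{A+\frac12}$. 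Undoing the substitution and using that $f$ is supported in $\Om$, where $|\x|\ge\kappa$, I would convert the $\eta$-weighted input norm back to the $\mu_B^{A,2\om}$-weighted norm; on $\Om$ the quotient of these two weights is bounded by $(1+\tfrac1\kappa)$-type constants, exactly as in the previous corollary. Collecting the powers of $\Re$ produced by the kernel identity, by the two Jacobians, and by the weight rescaling then gives the asserted prefactor $\Re^{-1+\om}$, the $\Re^{-1}$ stemming from Theorem~\ref{theo6} and the remaining power of $\Re$ from the comparison of the weights on $\Om$, while the constraint $\om\in[0,\tfrac A2)$ keeps all weight indices inside the admissible range of Theorem~\ref{theo6}, i.e.\ the range (\ref{AB}).

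I expect the bookkeeping of the $\Re$-powers to be the main obstacle, and in particular the reconciliation of the half-power shift $A\mapsto A+\frac12$ coming from the $|\nabla\bO|$-kernel with the anisotropic splitting $\mu_B^{A,\om}=\eta_B^{A-\om}\nu_0^\om$ of the weights. Because the substitution $\x\mapsto\Re\x$ changes the size of the hole $\DD$, one must verify that the weight-comparison constants used on $\Om$ do not degenerate as $\Re\to0$; it is precisely this control, available only because the two sides carry the different exponents $\om$ and $2\om$, that converts the scaling into a genuine power of $\Re$ rather than a dimensionless constant. Once the comparison of the $\mu$- and $\eta$-weights on $\Om$ is pinned down with the correct $\Re$-dependence, the remaining manipulations are the routine weighted-$L^p$ estimates already carried out for the corollary to Theorem~\ref{Koch}.
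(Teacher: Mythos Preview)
Your proposal is correct and matches the paper's approach exactly: the paper does not write out a separate proof for this corollary, relying instead on the argument already given for the corollary to Theorem~\ref{Koch} (equivalence of $\mu^{A,\om}_B$ and $\eta^A_B$ on $\Om$ via $|\x|\ge\kappa$, combined with the kernel rescaling $\nabla\bO(\x,\Re)=\Re^2\nabla\bO(\Re\x,1)$), which is precisely what you describe. One small slip: the admissible range you should invoke at the end is that of Theorem~\ref{theo6} itself, not the range~(\ref{AB}), which pertains to Koch's theorem.
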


\begin{theo}\label{theo7}
Let 
\begin{equation}
T f(\x) = \frac{\de}{\de x_i} \int_{\rtri} e_j(\x-\y) f(\y)\d\y, \quad i,j = 1,2,3,
\end{equation}
$f \in C_0^\infty(\rtri)$, $p \in (1,\infty)$ and let $g$ stands for one of weights $\eta_B^A$, $\nu_B^A$, $\mu_B^{A,\om}$. Let $A,B$ be such that $g$ is an $A_p$ weight in $\rtri$. Then $T$ maps $C_0^\infty(\rtri)$ into $L^p(\rtri,g)$ and
\begin{equation}
\left\|T f\right\|_{p,(g),\rtri} \leq C\left\|f\right\|_{p,(g),\rtri}.
\end{equation}
Moreover T can be continuously extended onto $L^p(\rtri,g)$.
\end{theo}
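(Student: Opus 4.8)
The plan is to recognize $T$ as a classical Calderón--Zygmund singular integral and then to invoke the Muckenhoupt weighted theory. Since $e_j=\de_j\E$ by the identity $\e=\nabla\E$, the operator may be rewritten as
\[
Tf(\x)=\frac{\de}{\de x_i}\int_{\rtri}\frac{\de\E}{\de x_j}(\x-\y)f(\y)\,\d\y=\left(\frac{\de^2\E}{\de x_i\,\de x_j}\ast f\right)(\x),
\]
so that $T$ is, up to sign, a composition of two Riesz transforms with convolution kernel $K(\x)=\frac{\de^2\E}{\de x_i\,\de x_j}(\x)$. First I would record the three structural properties of $K$. Because $\E(\x)$ is homogeneous of degree $-1$ in $\rtri$, $K$ is homogeneous of degree $-3$ and of class $C^\infty$ away from the origin; in particular it satisfies the size bound $\abs{K(\x)}\le C\abs{\x}^{-3}$ and the gradient (H\"ormander) bound $\abs{\nabla K(\x)}\le C\abs{\x}^{-4}$. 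Writing $K(\x)=k(\x/\abs{\x})\abs{\x}^{-3}$, a direct computation using $\langle x_ix_j\rangle_{S^2}=\tfrac13\delta_{ij}$ shows that $\int_{\abs{\x}=1}k\,\d S=0$, i.e. $K$ has vanishing spherical mean and thus satisfies the Calderón--Zygmund cancellation condition.

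Next I would establish the unweighted mapping properties. The cleanest route to $L^2$ boundedness is the Fourier side: $\widehat{Tf}(\xi)=m(\xi)\widehat f(\xi)$ with $m(\xi)=\pm\,\xi_i\xi_j/\abs{\xi}^2$ and $\abs{m}\le 1$, so $T$ is bounded on $L^2(\rtri)$. Together with the kernel estimates above, the classical Calderón--Zygmund theorem yields that $T$ extends to a bounded operator on $L^p(\rtri)$ for every $p\in(1,\infty)$. This is the standard (unweighted) half of the statement.

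The weighted bound is then a direct consequence of the Coifman--Fefferman / Hunt--Muckenhoupt--Wheeden theorem on weighted norm inequalities for Calderón--Zygmund operators: if a convolution operator with a standard Calderón--Zygmund kernel is bounded on $L^2$, then for every Muckenhoupt weight $w\in A_p$ one has $\int_{\rtri}\abs{Tf}^pw\,\d\x\le C\int_{\rtri}\abs{f}^pw\,\d\x$. The only point requiring care is the normalization of the weight: the norm here is $\left\|Tf\right\|_{p,(g)}=\left\|(Tf)g\right\|_p$, so setting $w=g^p$ the desired inequality $\int\abs{Tf}^pg^p\le C\int\abs{f}^pg^p$ is exactly the weighted estimate for $w$. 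Moreover the $A_p$ condition stated in the preliminaries, written in terms of $g$ with the exponents $p$ and $-\frac{p}{p-1}$, is precisely the requirement that $w=g^p$ be a Muckenhoupt weight in the standard sense; hence the hypothesis that $g$ be an $A_p$ weight supplies exactly $w=g^p\in A_p$ for each admissible choice of $\eta^A_B$, $\nu^A_B$, $\mu^{A,\om}_B$. Finally, since $C_0^\infty(\rtri)$ is dense in $L^p(\rtri,g)$ for $A_p$ weights and the bound is uniform, $T$ extends continuously to all of $L^p(\rtri,g)$, which gives the last assertion.

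As for the main obstacle, there is no genuine analytic difficulty, as the result is a textbook application of weighted singular integral theory. The only steps demanding attention are the verification of the cancellation condition $\int_{S^2}k=0$, which is what makes $T$ a bona fide $L^2$-bounded Calderón--Zygmund operator rather than a merely formal convolution, and the bookkeeping that matches the paper's $A_p$ normalization for $g$ with the standard Muckenhoupt class for $w=g^p$, so that the classical weighted theorem applies verbatim.
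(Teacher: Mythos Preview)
Your argument is correct and is exactly the standard route: recognize $T=\de_i\de_j\E\ast\cdot$ as a Calder\'on--Zygmund operator (a composition of Riesz transforms), use the Fourier multiplier $m(\xi)=\xi_i\xi_j/\abs{\xi}^2$ for $L^2$ boundedness, and then invoke the Coifman--Fefferman weighted theory for $A_p$ weights; your bookkeeping matching the paper's normalization $\left\|u\right\|_{p,(g)}=\left\|ug\right\|_p$ with the Muckenhoupt class via $w=g^p$ is also correct. The paper itself does not supply a proof of this theorem but refers to \cite{KrNoPo} and \cite{PoPhd}, where precisely this Calder\'on--Zygmund/Muckenhoupt argument is carried out, so your proposal coincides with the intended proof.
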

\begin{corr}
The same holds also for the case of an exterior domain $\Om$.
\end{corr}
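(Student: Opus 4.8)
The plan is to reduce the exterior-domain statement to Theorem \ref{theo7} on the whole space by an extension-by-zero argument. The essential observation is that $T$ is a convolution against the fixed kernel $\frac{\de}{\de x_i}e_j(\x-\y)$ (a derivative of the fundamental solution of the Laplace equation), so the only role played by the domain is to restrict the region of integration. This is precisely the situation in which a zero-extension argument is transparent, and it parallels the way the earlier corollaries exploit the whole-space estimates of Koch and of Theorem \ref{theo6}.

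First I would fix $f \in L^p(\Om,g)$ and let $\tilde{f}$ denote its extension by zero to all of $\rtri$, i.e. $\tilde{f} = f$ on $\Om$ and $\tilde{f} = 0$ on $\DD$. Since $g$ is an $A_p$ weight on $\rtri$ (the hypothesis inherited from Theorem \ref{theo7}) and the weights in (\ref{weights}) are defined on all of $\rtri$, we have $\tilde{f} \in L^p(\rtri,g)$, and because $\tilde{f}$ vanishes outside $\Om$ the extension is norm-preserving:
\begin{equation*}
\left\|\tilde{f}\right\|_{p,(g),\rtri} = \left\|\tilde{f}g\right\|_{p,\rtri} = \left\|fg\right\|_{p,\Om} = \left\|f\right\|_{p,(g),\Om}.
\end{equation*}
Next I would identify the exterior-domain operator (which I continue to denote $T$) as the restriction to $\Om$ of the whole-space operator applied to $\tilde{f}$: for $\x \in \Om$,
\begin{equation*}
\frac{\de}{\de x_i}\int_\Om e_j(\x-\y)f(\y)\,\d\y = \frac{\de}{\de x_i}\int_{\rtri} e_j(\x-\y)\tilde{f}(\y)\,\d\y,
\end{equation*}
which holds because $\tilde{f}$ is supported in $\Om$. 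Applying the continuous extension of $T$ to $L^p(\rtri,g)$ furnished by Theorem \ref{theo7}, and estimating the restricted norm over $\Om$ by the full norm over $\rtri$, I obtain
\begin{equation*}
\left\|T f\right\|_{p,(g),\Om} \leq \left\|T\tilde{f}\right\|_{p,(g),\rtri} \leq C\left\|\tilde{f}\right\|_{p,(g),\rtri} = C\left\|f\right\|_{p,(g),\Om},
\end{equation*}
which is exactly the claimed estimate with the same constant $C$.

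I do not expect a serious analytic obstacle in this argument; the single point that must be checked with care is the identification above that convolution over $\Om$ coincides with convolution over $\rtri$ against the zero-extension. This is immediate precisely because the kernel is geometry-independent and $\tilde{f}$ vanishes on $\DD$, so no boundary correction terms appear (in contrast with the integral representation formulas (\ref{int0})--(\ref{int2}), whose boundary integrals are genuinely needed only when the kernel is applied to the solution rather than convolved with a given datum). The density of $C_0^\infty(\Ombar)$ in $L^p(\Om,g)$ for $A_p$ weights then allows one to pass from the smooth functions of Theorem \ref{theo7} to the full weighted space, and since the zero-extension preserves the weighted norm the constant is inherited unchanged from the whole-space estimate.
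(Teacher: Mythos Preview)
Your argument is correct; the zero-extension reduction to the whole-space estimate of Theorem~\ref{theo7} is exactly the natural way to obtain the corollary, and the paper itself states the corollary without proof, so there is no alternative argument to compare against. The only point one might add for completeness is that the weights $\eta_B^A$, $\nu_B^A$, $\mu_B^{A,\om}$ are locally bounded above and below on the bounded obstacle $\DD$, so extending by zero into $\DD$ causes no issue with the weighted norms---but you have effectively observed this already by noting that the extension is norm-preserving.
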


\begin{theo} $\quad$

\begin{itemize}
\item Let $B \in (-\frac{1}{p},\frac{p-1}{p})$ and $A + B \in (-\frac{3}{p},\frac{3(p-1)}{p})$. Then the weight $\eta_B^A$ is an $A_p$ weight in $\rtri$ for $p \in (1,\infty)$.
\item Let moreover $A \in (-\frac{3}{p},\frac{3(p-1)}{p})$ and $\om \in [0,A]$. Then the weights $\nu_B^A$ and $\mu_B^{A,\om}$ are $A_p$ weights in $\rtri$ for $p \in (1,\infty)$.
\end{itemize}
\end{theo}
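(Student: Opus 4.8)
The plan is to verify the Muckenhoupt condition directly on cubes. In the convention of this paper $g$ is an $A_p$ weight precisely when $w:=g^p$ satisfies
\begin{equation*}
\sup_Q\left(\frac{1}{\abs{Q}}\int_Q g^p\,\d\x\right)\left(\frac{1}{\abs{Q}}\int_Q g^{-\frac{p}{p-1}}\,\d\x\right)^{p-1}<\infty,
\end{equation*}
the supremum over all cubes $Q\subset\rtri$. Two elementary comparability facts drive everything. First, on a cube $Q$ of side $\ell$ centred at $\x_0$ the factor $1+\abs{\x}$ is comparable to $R:=1+\abs{\x_0}$ as soon as $\ell\lesssim R$, by the triangle inequality. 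Second, since $\nabla s(\x)=\tfrac{\x}{\abs{\x}}-\e_1$ has $\abs{\nabla s}\le2$, the function $s$ is globally Lipschitz, so $1+s(\x)$ is comparable to $1+s(\x_0)$ on $Q$ whenever $\ell\lesssim 1+s(\x_0)$. The geometry is encoded in the identity $s(\x)=\abs{\x}(1-\cos\theta)$, with $\theta$ the angle to the positive $x_1$-axis: off that axis $s(\x)\sim\abs{\x}$, while inside the paraboloidal wake $s(\x)\sim\varrho^2/\abs{\x}$ with $\varrho$ the distance to the axis.

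First I would dispose of the \emph{small} cubes. If $\ell\lesssim\min\{1+\abs{\x_0},1+s(\x_0)\}$, then by the two facts above both $1+\abs{\x}$ and $1+s(\x)$ -- and, for $\nu$ and $\mu$ away from the origin, the factor $\abs{\x}$ as well -- are essentially constant on $Q$; the two averages are then reciprocal up to constants and the $A_p$ ratio is $O(1)$ with no restriction on $A,B,\om$. Hence the admissibility conditions can only come from cubes that are large relative to their distance to the origin, or that straddle the wake axis, and on these the problem reduces to estimating model integrals $\int_Q(1+\abs{\x})^a(1+s)^b\,\d\x$, which by the above geometry split into a radial and a transverse (paraboloidal) part.

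Two critical regimes remain. In the \emph{origin regime} $\ell\gtrsim 1+\abs{\x_0}$ the cube contains $\bnul$ and $1+\abs{\x}$ ranges over $[1,\sim\ell]$; on the bulk (a positive solid angle where $s\sim\abs{\x}$) one has $g\sim\abs{\x}^{A+B}$, and comparing the origin scale, where $\eta\sim1$ carries no singularity, with the outer scale $\abs{\x}\sim\ell$ shows the ratio is controlled by that of $\abs{\x}^{(A+B)p}$ on a ball, finite exactly when $(A+B)p\in(-3,3(p-1))$, i.e. $A+B\in(-\tfrac{3}{p},\tfrac{3(p-1)}{p})$. For $\nu$ and $\mu$ the genuine origin singularity $\abs{\x}^{A}$ (respectively $\abs{\x}^{\om}$) must be treated separately: splitting off the sub-ball where $Q$ meets $\bnul$, where $1+s\sim1$ and $g\sim\abs{\x}^{A}$, produces the extra origin condition $A\in(-\tfrac{3}{p},\tfrac{3(p-1)}{p})$, while $\om\in[0,A]$ guarantees that the interpolating exponents of $\mu=\eta^{A-\om}_B\,\nu^{\om}_0$ never leave the admissible range. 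In the \emph{wake regime} $1+s(\x_0)\lesssim\ell\lesssim 1+\abs{\x_0}=:R$ the radial factor $(1+\abs{\x})^{A}\sim R^{A}$ is frozen and cancels identically in the $A_p$ quotient, leaving only the $(1+s)$-part; the paraboloidal change of variables $s=\varrho^2/(2R)$, $\varrho\,\d\varrho=R\,\d s$, collapses the two transverse dimensions and yields $\tfrac{1}{\abs{Q}}\int_Q f(s)\,\d\x\sim\tfrac{1}{S}\int_0^{S}f(s)\,\d s$ with $S\sim\ell^2/R$, a genuine one-dimensional average. Hence the $s$-ratio reduces to the one-dimensional Muckenhoupt condition for $(1+s)^{Bp}$, finite exactly when $Bp\in(-1,p-1)$, that is $B\in(-\tfrac1p,\tfrac{p-1}{p})$.

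Collecting the three regimes reproduces precisely the stated conditions, and multiplying the uniformly bounded contributions closes the argument. I expect the wake regime to be the main obstacle: one must justify the reduction of the anisotropic, degenerate geometry -- the positive $x_1$-axis is the critical set of $s$ -- to the clean one-dimensional $A_p$ test, carefully tracking the transverse Jacobian $\varrho\,\d\varrho\sim R\,\d s$ and checking that the radial factor decouples uniformly over all admissible positions and sizes of $Q$. In particular one must also handle off-axis cubes, where the $s$-average runs over a general interval $[s_-,s_+]$ rather than $[0,S]$, so that the full one-dimensional $A_p$ bound on arbitrary intervals -- not merely intervals anchored at $0$ -- is what is ultimately needed.
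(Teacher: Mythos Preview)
The paper does not actually prove this statement: it appears among the block of preliminary results introduced with the sentence ``The proofs of the following theorems can be found for example in \cite{KrNoPo} or in \cite{PoPhd}.''  There is therefore no in-paper argument to compare your proposal against.

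That said, your outline is the standard direct verification of the Muckenhoupt condition and is, in structure, what one finds in those references: a trichotomy into (i) cubes small relative to both $1+\abs{\x_0}$ and $1+s(\x_0)$, on which the weight is essentially constant; (ii) cubes large relative to $1+\abs{\x_0}$, which see the full radial range and yield the condition $A+B\in(-\tfrac{3}{p},\tfrac{3(p-1)}{p})$ (plus, for $\nu$ and $\mu$, the genuine origin singularity giving $A\in(-\tfrac{3}{p},\tfrac{3(p-1)}{p})$); and (iii) cubes with $1+s(\x_0)\lesssim\ell\lesssim 1+\abs{\x_0}$, on which the radial factor freezes and the parabolic substitution $\varrho\,\d\varrho\sim R\,\d s$ collapses the transverse integration to a one-dimensional $A_p$ test for $(1+s)^{Bp}$, producing $B\in(-\tfrac{1}{p},\tfrac{p-1}{p})$.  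Your self-flagged concern is the right one: in regime (iii) one must treat cubes whose $s$-range is a general interval, so the full one-dimensional $A_p$ bound on arbitrary intervals is what is invoked, and in regime (ii) one must still account for the wake sliver inside the large cube, where $s$ stays small; this is where the conditions on $B$ and on $A+B$ interact and some additional bookkeeping is needed beyond what your sketch records.
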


\subsection{Transport equation}

Next we consider the steady transport equation
\begin{equation}\label{trans}
z + \w \cdot \nabla z = f \qquad \text{ in } \Om.
\end{equation}
This equation is scalar, nevertheless all theorems below hold also for the vector case. The following theorems are proved in \cite{No} even for more complicated cases.

\begin{theo}\label{transiii}
(i) Let $\Om \in C^{0,1}$ be an exterior domain, $\w \in C^{k-1}(\Om)$, $\w\cdot\n=0$ on $\de\Om$, $\nabla^k\w \in L^3(\Om)$, $f \in W^{k,q}$ for $q \in (1,3)$, $kq > 3$. Then there exists $\alpha > 0$ such that if 
\begin{equation}
\left\|\nabla\w\right\|_{C^{k-2}} + \left\|\nabla^k\w\right\|_3 < \alpha,
\end{equation}
then there exists unique solution $z \in W^{k,q}(\Om)$ to (\ref{trans}) satisfying the estimate
\begin{equation}
\left\|z\right\|_{k,q} \leq C(\alpha)\left\|f\right\|_{k,q}.
\end{equation}

(ii) Let $\Om \in C^{0,1}$ be an exterior domain, $\w \in C^k(\Om)$, $\w\cdot\n=0$ on $\de\Om$, $f \in W^{k,q}$ for $kq > 3$. Then there exists $\alpha > 0$ such that if 
\begin{equation}
\left\|\nabla\w\right\|_{C^{k-1}} < \alpha,
\end{equation}
then there exists unique solution $z \in W^{k,q}(\Om)$ to (\ref{trans}) satisfying the estimate
\begin{equation}
\left\|z\right\|_{k,q} \leq C(\alpha)\left\|f\right\|_{k,q}.
\end{equation}
\end{theo}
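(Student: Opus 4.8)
The strategy is to treat \eqref{trans} as an operator equation in which the undifferentiated term $z$ plays the role of a damping that supplies coercivity in place of the ellipticity a genuine transport operator lacks; existence will then follow from a priori $W^{k,q}$ bounds, while uniqueness is immediate from linearity. I note at the outset that a naive Picard iteration $z^{(n+1)}=f-\w\cdot\nabla z^{(n)}$ cannot work on $W^{k,q}$, because $\w\cdot\nabla$ loses a derivative; I will instead construct the solution by elliptic regularization (adding $-\ep\Delta$), for which the a priori estimates hold uniformly in $\ep$, and then pass to the limit. A concrete alternative is the explicit representation along backward characteristics, which I comment on below.

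The heart of the matter is the a priori estimate, obtained by an energy method. For a multi-index $\beta$ with $\abs{\beta}\le k$ I apply $D^\beta$ to \eqref{trans}, getting $D^\beta z+\w\cdot\nabla D^\beta z=D^\beta f-[D^\beta,\w\cdot\nabla]z$ with commutator $\sum_{0<\gamma\le\beta}\binom{\beta}{\gamma}(D^\gamma\w)\cdot\nabla(D^{\beta-\gamma}z)$, and test against $\abs{D^\beta z}^{q-2}D^\beta z$. The damping term produces $\|D^\beta z\|_q^q$; the convective term, rewritten as $\frac1q\int_\Om\w\cdot\nabla(\abs{D^\beta z}^q)\,\d\x$ and integrated by parts, contributes $-\frac1q\int_\Om(\div\w)\abs{D^\beta z}^q\,\d\x$ plus a boundary integral $\frac1q\int_{\de\Om}(\w\cdot\n)\abs{D^\beta z}^q\,\d S$ that vanishes because $\w\cdot\n=0$ on $\de\Om$. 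The surviving divergence term is controlled by $\frac1q\|\div\w\|_\infty\|D^\beta z\|_q^q$. Summing over $\abs{\beta}\le k$ and applying Hölder on the right-hand side yields $\|z\|_{k,q}\le C\bigl(\|f\|_{k,q}+\alpha\|z\|_{k,q}\bigr)$, where $\alpha$ bounds the norms of $\w$ entering $\div\w$ and the commutator; for $\alpha$ small the last term is absorbed, giving $\|z\|_{k,q}\le C(\alpha)\|f\|_{k,q}$. Applying this bound to the difference of two solutions with $f=0$ gives uniqueness.

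For existence I solve the regularized problems $z_\ep+\w\cdot\nabla z_\ep-\ep\Delta z_\ep=f$ with homogeneous Neumann data, uniquely solvable by standard elliptic theory. In the energy estimate above the extra term contributes, after integration by parts (the Neumann condition killing its boundary term), $\ep(q-1)\int_\Om\abs{D^\beta z_\ep}^{q-2}\abs{\nabla D^\beta z_\ep}^2\,\d\x\ge0$, so it only helps and the $W^{k,q}$ bounds are uniform in $\ep$. Extracting a weak limit $z_\ep\rightharpoonup z$ in $W^{k,q}$ and noting $\ep\Delta z_\ep\to0$ in the sense of distributions, the limit solves \eqref{trans} and inherits the estimate. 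Alternatively one may set $z(\x)=\int_{-\infty}^0 e^{\tau}f(X(\tau,\x))\,\d\tau$, where $X(\cdot,\x)$ is the backward flow of $\w$; the tangency $\w\cdot\n=0$ keeps trajectories inside $\Om$ and the factor $e^{\tau}$, coming precisely from the damping, makes the integral converge, after which differentiating the flow recovers the same regularity.

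The step I expect to be the real obstacle is the top-order commutator $\gamma=\beta$, i.e. $(D^\beta\w)\cdot\nabla z$ with $\abs{\beta}=k$. In part (ii) the hypothesis $\|\nabla\w\|_{C^{k-1}}<\alpha$ controls every derivative of $\w$ up to order $k$ in $L^\infty$, so this term is bounded at once by $\alpha\|\nabla z\|_q\le\alpha\|z\|_{k,q}$, and all mixed terms are handled the same way. Part (i) is delicate because $\nabla^k\w$ lies only in $L^3$: here I use Hölder with exponents $3$ and $\tfrac{3q}{3-q}$, legitimate since $q\in(1,3)$, to write $\|(D^\beta\w)\cdot\nabla z\|_q\le\|\nabla^k\w\|_3\,\|\nabla z\|_{3q/(3-q)}$, and then the Sobolev embedding $W^{1,q}(\Om)\hookrightarrow L^{3q/(3-q)}(\Om)$ applied to $\nabla z\in W^{k-1,q}$ (here $k\ge2$, forced by $q<3$ together with $kq>3$) closes the bound. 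The mixed terms $0<\gamma<\beta$ are easier, the factors of $\w$ being controlled in $L^\infty$ by $\|\nabla\w\|_{C^{k-2}}$ while the $z$-derivatives stay in $L^q$; the condition $kq>3$, which guarantees $W^{k,q}\hookrightarrow L^\infty$, is what ultimately lets this local estimate feed into the full nonlinear scheme. Thus the two hypotheses $q<3$ and $kq>3$ are exactly calibrated to make the top-order estimate of part (i) go through.
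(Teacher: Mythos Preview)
The paper does not give a proof of this theorem; it simply records that both parts are proved in \cite{No} and quotes the result. So there is no argument in the paper to compare against. That said, your energy-method approach---differentiate, test with $\abs{D^\beta z}^{q-2}D^\beta z$, use $\w\cdot\n=0$ to kill the boundary flux, and absorb the commutator via the smallness of $\nabla\w$---is exactly the standard route used in the transport-equation literature (including Novotn\'y's paper), and your treatment of the commutators is correct. In particular, your handling of the critical top-order term $(D^\beta\w)\cdot\nabla z$ in part (i) via H\"older with exponents $(3,\tfrac{3q}{3-q})$ and the Sobolev embedding $W^{1,q}\hookrightarrow L^{3q/(3-q)}$, together with the observation that $q<3$ and $kq>3$ force $k\ge 2$, is precisely the point of the hypotheses.

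There is one technical gap worth flagging in your existence scheme. In the elliptic regularization, when you differentiate to order $\abs{\beta}\ge 1$ and integrate the term $-\ep\Delta D^\beta z_\ep$ by parts, the boundary contribution is $\ep\int_{\de\Om}\tfrac{\de D^\beta z_\ep}{\de n}\,\abs{D^\beta z_\ep}^{q-2}D^\beta z_\ep\,\d S$, and the homogeneous Neumann condition on $z_\ep$ itself does \emph{not} make this vanish; your claim that ``the Neumann condition kills its boundary term'' is only valid for $\beta=0$. This does not invalidate the strategy, but it means the higher-order estimate for $z_\ep$ needs either a tangential/normal splitting near $\de\Om$, a Galerkin scheme, or (as you also suggest) the characteristic representation, where the condition $\w\cdot\n=0$ genuinely suffices. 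Any of these is routine, but the sentence as written glosses over a real step.
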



\begin{theo}\label{weightedtrans}
Let $\Om,k,q,\w$ and $f$ satisfy the assumptions of Theorem \ref{transiii} (ii). Moreover let $g \in C^k(\Om)$ be a positive weight such that $W^{k,q}(\Om,g) \subset W^{k,q}(\Om)$ and let
\begin{equation}
\left\|\w\cdot\nabla\ln g\right\|_{C^{k-1}} + \left|\w\cdot\nabla\ln g\right|_{k,q}
\end{equation}
be sufficiently small. Let $f \in W^{k,q}(\Om,g)$. Then $z$, the solution to (\ref{trans}), belongs to $W^{k,q}(\Om,g)$ and
\begin{equation}
\left\|z\right\|_{k,q,(g)} \leq C\left\|f\right\|_{k,q,(g)}.
\end{equation}
\end{theo}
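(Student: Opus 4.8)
The plan is to remove the weight by a change of unknown and then treat the resulting undifferentiated term as a forcing, solving the transformed equation by a contraction argument built on Theorem \ref{transiii}(ii). Since $f \in W^{k,q}(\Om,g) \subset W^{k,q}(\Om)$ by hypothesis, the solution $z$ supplied by Theorem \ref{transiii}(ii) exists and is the unique solution of (\ref{trans}) in $W^{k,q}(\Om)$. Set $\tilde z = zg$ and $\tilde f = fg$, so that $\tilde f \in W^{k,q}(\Om)$ exactly because $f \in W^{k,q}(\Om,g)$, and by the definition of the weighted norm $\|z\|_{k,q,(g)} = \|\tilde z\|_{k,q}$ and $\|f\|_{k,q,(g)} = \|\tilde f\|_{k,q}$. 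Multiplying (\ref{trans}) by $g$ and using $\w\cdot\nabla(zg) = g\,\w\cdot\nabla z + zg\,(\w\cdot\nabla\ln g)$ (legitimate as $g>0$ and $g\in C^k$), one finds that $\tilde z$ must satisfy
\[
\tilde z + \w\cdot\nabla\tilde z = \tilde f + (\w\cdot\nabla\ln g)\,\tilde z \qquad\text{in } \Om,
\]
i.e. the same transport equation, with the same coefficient $\w$ (which continues to satisfy all hypotheses of Theorem \ref{transiii}(ii)), the weight now entering only through the extra term $(\w\cdot\nabla\ln g)\,\tilde z$ on the right.

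Next I would set up the fixed-point map. For $h \in W^{k,q}(\Om)$ let $\mathcal{K}(h)$ be the unique $W^{k,q}(\Om)$ solution, given by Theorem \ref{transiii}(ii), of
\[
\zeta + \w\cdot\nabla\zeta = \tilde f + (\w\cdot\nabla\ln g)\,h \qquad\text{in } \Om.
\]
This is well defined once $(\w\cdot\nabla\ln g)\,h \in W^{k,q}(\Om)$, which follows from the product estimate below, and Theorem \ref{transiii}(ii) then yields $\|\mathcal{K}(h)\|_{k,q} \le C(\alpha)\,\|\tilde f + (\w\cdot\nabla\ln g)\,h\|_{k,q}$. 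A fixed point $\tilde z = \mathcal{K}(\tilde z)$ is precisely a solution of the displayed equation for $\tilde z$.

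The technical core is a product (Banach-algebra type) estimate showing, with $\delta := \|\w\cdot\nabla\ln g\|_{C^{k-1}} + \abs{\w\cdot\nabla\ln g}_{k,q}$,
\[
\|(\w\cdot\nabla\ln g)\,h\|_{k,q} \le C\,\delta\,\|h\|_{k,q}.
\]
To obtain it I would expand $D^\beta\big((\w\cdot\nabla\ln g)h\big)$ by the Leibniz rule for $\abs{\beta}\le k$: every term in which at most $k-1$ derivatives fall on the coefficient is controlled by $\|\w\cdot\nabla\ln g\|_{C^{k-1}}\|h\|_{k,q}$, while the single top-order term $h\,D^\beta(\w\cdot\nabla\ln g)$ with $\abs{\beta}=k$ is bounded by $\|h\|_\infty\,\abs{\w\cdot\nabla\ln g}_{k,q}$, using the embedding $W^{k,q}(\Om)\hookrightarrow L^\infty(\Om)$, valid since $kq>3$. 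This is exactly why the smallness assumption must involve both $\|\cdot\|_{C^{k-1}}$ and the homogeneous seminorm $\abs{\cdot}_{k,q}$. By linearity, $\mathcal{K}(h_1)-\mathcal{K}(h_2)$ solves the transport equation with right-hand side $(\w\cdot\nabla\ln g)(h_1-h_2)$, so combining with Theorem \ref{transiii}(ii) gives $\|\mathcal{K}(h_1)-\mathcal{K}(h_2)\|_{k,q} \le C(\alpha)\,C\,\delta\,\|h_1-h_2\|_{k,q}$; hence for $\delta$ small enough that $C(\alpha)C\delta<1$, the map $\mathcal{K}$ is a contraction on $W^{k,q}(\Om)$ and yields a unique $\tilde z$ with $\|\tilde z\|_{k,q} \le C\|\tilde f\|_{k,q}$.

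Finally I would identify $\tilde z$ with $zg$. Putting $z' := \tilde z/g$ (again using $g>0$, $g\in C^k$), a direct computation reverses the change of variables and shows $z' + \w\cdot\nabla z' = f$; moreover $\|z'g\|_{k,q}=\|\tilde z\|_{k,q}<\infty$ gives $z' \in W^{k,q}(\Om,g)\subset W^{k,q}(\Om)$. By the uniqueness part of Theorem \ref{transiii}(ii) in $W^{k,q}(\Om)$, necessarily $z'=z$, so $z\in W^{k,q}(\Om,g)$ with $\|z\|_{k,q,(g)}=\|\tilde z\|_{k,q}\le C\|\tilde f\|_{k,q}=C\|f\|_{k,q,(g)}$. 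The main obstacle is the product estimate together with the bookkeeping of constants in the contraction: one must verify that the single smallness threshold declared in the statement does make $C(\alpha)C\delta<1$, so that the two smallness requirements (on $\w$, inherited from Theorem \ref{transiii}(ii), and on $\delta$) can be met simultaneously.
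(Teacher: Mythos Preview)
The paper does not give its own proof of this theorem; it simply attributes Theorems~\ref{transiii} and \ref{weightedtrans} to \cite{No}. Your argument---substitute $\tilde z = zg$, reduce to the unweighted equation with the extra zeroth-order term $(\w\cdot\nabla\ln g)\tilde z$, control this term by the product estimate using $kq>3$, and close by a contraction built on Theorem~\ref{transiii}(ii)---is correct and is exactly the standard proof one finds in the literature on the steady transport equation, so it almost certainly coincides with the argument in \cite{No}.
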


\section{The proof of Theorem \ref{maintheo}}

\subsection{Existence of solution}

Here we briefly sketch the method of constructing the solution to the system (\ref{Pr1})-(\ref{Pr4}). It is based on the following version of the Banach fixed point theorem.
\begin{theo}\label{Banach}
Let $X, Y$ be Banach spaces such that $X$ is reflexive and $X \hookrightarrow Y$. Let $H$ be nonempty, closed, convex and bounded subset of $X$ and let $\M: H \mapsto H$ be a mapping such that
\begin{equation}
\left\|\M(u) - \M(v)\right\|_Y \leq \delta\left\|u - v\right\|_Y \quad \forall u, v \in H,
\end{equation}
$\delta \in [0,1)$. Then $\M$ has a unique fixed point in $H$.
\end{theo}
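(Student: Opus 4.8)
The plan is to combine the usual Picard iteration with a weak-compactness argument, since $\M$ is only assumed to contract in the weaker norm of $Y$ while the fixed point must live in $H \subset X$. First I would fix an arbitrary $u_0 \in H$ and set $u_{n+1} = \M(u_n)$; this is well defined because $\M$ maps $H$ into itself. The contraction hypothesis gives $\left\|u_{n+1} - u_n\right\|_Y \leq \delta^n \left\|u_1 - u_0\right\|_Y$, so $(u_n)$ is Cauchy in $Y$ and converges strongly in $Y$ to some limit $u$.

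The difficulty is that $Y$-convergence alone does not guarantee $u \in H$, nor even that $u$ lies in $X$. Here I would exploit the reflexivity of $X$: the sequence $(u_n)$ is contained in the bounded set $H$, hence bounded in $X$, so a subsequence $u_{n_k}$ converges weakly in $X$ to some $\tilde u$. Because $H$ is convex and closed in $X$, Mazur's theorem makes it weakly closed, so $\tilde u \in H \subset X$. Since the embedding $X \hookrightarrow Y$ is continuous, weak $X$-convergence of $u_{n_k}$ forces weak $Y$-convergence to $\tilde u$; but $u_{n_k} \to u$ strongly, hence weakly, in $Y$, and weak limits are unique, so $\tilde u = u$. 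Thus the $Y$-limit $u$ in fact belongs to $H$.

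It remains to verify that $u$ is a fixed point and that it is unique. The contraction estimate makes $\M$ Lipschitz, hence continuous, from $(H, \left\|\cdot\right\|_Y)$ into $Y$; applying it along $u_{n_k} \to u$ gives $\M(u_{n_k}) \to \M(u)$ in $Y$. On the other hand $\M(u_{n_k}) = u_{n_k+1} \to u$ in $Y$, so by uniqueness of the limit $\M(u) = u$. For uniqueness of the fixed point, if $u$ and $v$ both satisfy $\M(u)=u$, $\M(v)=v$ in $H$, then $\left\|u - v\right\|_Y = \left\|\M(u) - \M(v)\right\|_Y \leq \delta \left\|u - v\right\|_Y$ with $\delta < 1$, forcing $\left\|u-v\right\|_Y = 0$; since the embedding $X \hookrightarrow Y$ is injective this yields $u = v$ in $X$ as well.

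The main obstacle is precisely the mismatch of norms: one cannot invoke the classical Banach fixed point theorem directly, because $H$ equipped with the $Y$-norm need not be complete and $\M$ is not known to contract in the norm of $X$. Reflexivity of $X$ together with Mazur's theorem is exactly what bridges this gap, allowing the strong $Y$-limit of the iterates to be recognized as an element of $H \subset X$. Once this identification is in place, the contraction property alone delivers both the fixed-point equation and its uniqueness.
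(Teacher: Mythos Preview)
Your argument is correct and is the standard proof of this variant of the Banach fixed point theorem. Note, however, that the paper does not supply its own proof of this statement: Theorem~\ref{Banach} is merely quoted as a known tool (with references to \cite{PoPhd} and \cite{NoPo} for the surrounding method), so there is nothing in the paper to compare your proof against. Your combination of Picard iteration in $Y$, weak sequential compactness in the reflexive space $X$, and Mazur's theorem to keep the limit inside $H$ is exactly the expected route, and your handling of the fixed-point identity and uniqueness is clean.
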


The proof of existence of solutions in Sobolev spaces is based on the method described for example in \cite{PoPhd} and \cite{NoPo}. The solution is obtained as a limit of successive approximations 
\begin{equation}
(\u_{n+1},q_{n+1}) = \M(\u_n,q_n), \qquad n \geq 0,
\end{equation}
where the mapping $\M$ was introduced in (\ref{mapping}). 
\begin{theo}\label{exSobol1}
Let $\Om \in C^{k+1}$ be an exterior domain in $\rtri$ and let $\f = \div \H$, $\H \in W^{k,2}(\Om)$, $k \geq 3$. Let $\Re_0$, $\We_0$ be sufficiently small. Then for any $\Re \in (0,\Re_0)$, $\We \in (0,\We_0)$ there exists $(\u,q)$ a solution to the system (\ref{Pr1})-(\ref{Pr4}) such that $\u \in L^4(\Om)$ and $\nabla\u, q \in W^{k,2}(\Om)$.
\end{theo}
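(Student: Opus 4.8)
The plan is to realize $(\u,q)$ as a fixed point of the map $\M$ from (\ref{mapping}) via the two-norm Banach theorem (Theorem \ref{Banach}). Let $X$ be the space of pairs $(\u,q)$ with $\u\in L^4(\Om)$, $\nabla\u\in W^{k,2}(\Om)$, $q\in W^{k,2}(\Om)$, normed by $\Re^{\frac14}\|\u\|_4+\|\nabla\u\|_{k,2}+\|q\|_{k,2}$, and let $Y$ be the analogous space with one spatial derivative fewer, so that $X\hookrightarrow Y$ continuously and $X$ is reflexive. Take $H$ to be the set of $(\w,s)\in X$ satisfying $\w=-\e_1$ on $\de\Om$ with $\|\cdot\|_X\le K$; it is nonempty, closed, convex and bounded, and since the Oseen solve in (\ref{Oseen}) builds in the boundary data, $\M$ automatically returns pairs with $\u=-\e_1$ on $\de\Om$. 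It then suffices to (a) pick $K,\Re_0,\We_0$ so that $\M:H\to H$, and (b) show $\M$ contracts in the weaker norm $Y$.

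For the invariance (a), fix $(\w,s)\in H$ and estimate the transport forcing $\B(\f,\w,s)$ of (\ref{transport}) in $W^{k-1,2}(\Om)$. Using $k\ge 3$ and the embeddings $W^{k,2}(\Om)\hookrightarrow C^{k-2}(\Om)\cap L^\infty(\Om)$ together with the algebra structure of these spaces, each quadratic term $\big((\w\cdot\nabla)\w$, $\div[(\nabla\w)^T(\nabla\w+(\nabla\w)^T)]$, $(\nabla\w)^T\nabla s$, $(\w\cdot\nabla)\de_1\w\big)$ is bounded by $CK^2$ and carries a prefactor $\Re$ or $\We$, while the two genuinely linear terms contribute $C\Re\|\H\|_{k,2}$ (from $\Re\f=\Re\div\H$) and $C\Re\We K$ (from $\Re\We\,\de_1^2\w$). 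The drift $\We(\w+\e_1)$ lies in $C^{k-1}(\Om)$, vanishes on $\de\Om$ (since $\w+\e_1=\bnul$ there, so it is tangent), and its gradient has $C^{k-2}$ norm of order $\We K$, meeting the smallness threshold of Theorem \ref{transiii}(ii) once $\We_0$ is small; applying that theorem with index $k-1$ and $q=2$ (legitimate as $2(k-1)>3$) gives $\|\z\|_{k-1,2}\le C\|\B\|_{k-1,2}$. Feeding $\z$ into (\ref{Oseen}) via Theorem \ref{theo1} (index $k-1$, exponent $q=2$, so $4q/(4-q)=4$, $a_2=\Re^{1/4}$, boundary datum $-\e_1\in W^{k+\frac12,2}(\de\Om)$, and $\Om\in C^{k+1}$) yields $\Re^{1/4}\|\u\|_4+\|\nabla\u\|_{k,2}+\|q\|_{k,2}\le C(|\z|_{-1,2}+\|\z\|_{k-1,2}+\|\e_1\|_{k+\frac12,2,\de\Om})$. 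The constant data fix the scale of $K$, the nonlinear contributions are $\le C(\Re_0+\We_0)K^2+C\Re_0\We_0 K$, and choosing $K$ of the order of the linear data, then $\Re_0,\We_0$ small, closes $\M:H\to H$.

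For the contraction (b), I subtract the equations for two images. The difference $\z_1-\z_2$ solves a transport equation whose forcing is the difference of the $\B$'s plus the commutator $\We[(\w_1-\w_2)\cdot\nabla]\z_2$; each such difference is a sum of terms linear in $(\w_1-\w_2,s_1-s_2)$ multiplied by $\Re$ or $\We$ (times a factor $\le CK$ coming from the remaining argument). Measuring in the norm of $Y$, which tolerates the one-derivative loss intrinsic to the transport step, and applying the linear transport and Oseen estimates, the smallness of $\Re_0,\We_0$ yields $\|\M(\w_1,s_1)-\M(\w_2,s_2)\|_Y\le\delta\|(\w_1-\w_2,s_1-s_2)\|_Y$ with $\delta\in[0,1)$. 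Theorem \ref{Banach} then produces a unique fixed point in $H$, and unwinding (\ref{mapping}) shows this pair solves (\ref{Pr2})--(\ref{Pr4}) with $p$ recovered from (\ref{Pr1}); this is the asserted solution.

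I expect the main obstacle to be verifying that the transport output $\z$ is an admissible right-hand side for Theorem \ref{theo1}, namely $\z\in D_0^{-1,2}(\Om)\cap W^{k-1,2}(\Om)$. The $W^{k-1,2}$ bound is the routine computation above, but controlling the negative norm $|\z|_{-1,2}$ forces one to exploit the divergence structure of the data: writing $(\w\cdot\nabla)\w=\div(\w\otimes\w)$ and the convective transport term as $\We\,\div((\w+\e_1)\z)$ (both using $\div\w=0$), keeping $\Re\f=\Re\div\H$ in divergence form, and carefully tracking the integrability of the resulting potentials at spatial infinity. The interplay of this negative-norm bookkeeping with the derivative loss of the transport step — exactly what the two-norm formulation of Theorem \ref{Banach} is designed to absorb — is the delicate heart of the argument.
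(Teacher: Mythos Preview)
Your approach is essentially the same as the paper's: both set $X=V_k$, $Y=V_{k-1}$ with the norm $\Re^{1/4}\|\u\|_4+\|\nabla\u\|_{k,2}+\|q\|_{k,2}$, invoke Theorem \ref{transiii} at level $k-1$ with $q=2$ for the transport step and Theorem \ref{theo1} for the Oseen step, and close via Theorem \ref{Banach}. The one point the paper makes explicit that you only anticipate in your final paragraph is that \emph{all} of $\B(\f,\w,s)$ can be written as $\div\C(\H,\w,s)$ with $\C\in W^{k,2}$ (not just the convective pieces), which is exactly what controls $|\z|_{-1,2}$; since you correctly flagged this as the crux and described the mechanism, your sketch is complete.
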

\begin{proof}
We use Theorem \ref{Banach} for the following choice of spaces: $X = V_k$, $Y = V_{k-1}$, where 
\begin{equation}
V_k = \left\{(\u,q): \u \in L^4(\Om), \nabla\u, q \in W^{k,2}(\Om)\right\}
\end{equation}
with the norm
\begin{equation}
\left\|(\u,q)\right\|_{V_k} = \Re^{\frac{1}{4}}\left\|\u\right\|_4 + \left\|\nabla\u\right\|_{k,2} + \left\|q\right\|_{k,2}
\end{equation}
We use Theorems \ref{theo1} and \ref{transiii} for $q = 2$ and the observation that the right-hand side of the transport equation (\ref{Pr3}) (we denote it by $\B(\f,\w,s)$) can be written in the divergence form as 
\begin{equation}
\begin{split}
\B(\f,\w,s) = \div \biggl[\Re\H - \Re \w \otimes \w + \We(\nabla\w)^T(\nabla\w + (\nabla\w)^T) - \biggr.\\
\left.- \We(\nabla\w)^T s + \Re\We \frac{\de\w}{\de x_1} \otimes \w + \Re\We \frac{\de\w}{\de x_1} \otimes \e_1 \right] =: \div\C(\H,\w,s)
\end{split}
\end{equation}
assuming $\f = \div \H$.
As a part of the proof we obtain the following estimates
\begin{equation}\label{ass1}
\begin{split}
\Re^{\frac{1}{4}}\left\|\u_n\right\|_4 + \left\|\nabla\u_n\right\|_{k,2} + \left\|q_n\right\|_{k,2} \leq K \\
\left\|\C(\H,\u_n,q_n)\right\|_{k,2} \leq K
\end{split}
\end{equation}
for all $n \geq 0$, for some constant $K > 0$ depending only on $\Re_0, \We_0$ and $\H$.
\end{proof}


\subsection{Weighted estimates}

In this section we study weighted estimates which are crucial to obtain asymptotic behaviour of the solution. As we have used Theorem \ref{Banach} to prove existence of solutions in Sobolev spaces, it is now sufficient to prove that $\M$ maps sufficiently large balls in proper weighted spaces into themselves. Then choosing $(\u_1,q_1)$ from this ball the solution, as the limit of the sequence $(\u_n,q_n)$, belongs to the same ball. We estimate the sequence in the following space
\begin{equation}
\begin{split}
V = \left\{(\u,q) : \u \in L^p(\Om, \mu_{1-\frac{2}{p}}^{1-\frac{3}{p},\om}(\cdot,\Re)), \nabla\u, \nabla^2\u \in L^p(\Om, \mu_{1-\frac{2}{p}}^{\frac{3}{2}-\frac{3}{p},\om}(\cdot,\Re)),\right. \\
\left.q, \nabla q \in L^p(\Om, \mu_{\frac{1}{2}-\frac{2}{p}}^{\frac{3}{2}-\frac{3}{p},\om}(\cdot,\Re))\right\}
\end{split}
\end{equation}
with the norm
\begin{equation}
\begin{split}
\left\|(\u,q)\right\|_V = \left\|\u\right\|_{p,\mu_{1-\frac{2}{p}}^{1-\frac{3}{p},\om}(\cdot,\Re)),\Om} + \\
 + \left\|\nabla\u, \nabla^2\u\right\|_{p,\mu_{1-\frac{2}{p}}^{\frac{3}{2}-\frac{3}{p},\om}(\cdot,\Re)),\Om} + \left\|q,\nabla q\right\|_{p,\mu_{\frac{1}{2}-\frac{2}{p}}^{\frac{3}{2}-\frac{3}{p},\om}(\cdot,\Re)),\Om},
\end{split}
\end{equation}
where $p$ is sufficiently large and $\om < \frac{1}{2} - \frac{3}{2p}$.

\begin{rem}\label{weightedSobolev}
There exist a constant $C$ depending only on $\Om,A,B,\om$ such that for $p > 3$, $A,B \geq 0$, $\om \in [0,A]$ and $\Re \leq 1$
\begin{equation}
\left\|g\right\|_{L^\infty(\Om,\mu^{A,\om}_B(\cdot,\Re))} \leq C(\left\|g\right\|_{L^p(\Om,\mu^{A,\om}_B(\cdot,\Re))} + \left\|\nabla g\right\|_{L^p(\Om,\mu^{A,\om}_B(\cdot,\Re))}).
\end{equation}
This is an easy consequence of the Sobolev imbedding theorem and the fact that there is a constant $C$ independent of $\Re$ such that
\begin{equation}
\left\|g\right\|_{L^p(\Om,\nabla\mu^{A,\om}_B(\cdot,\Re))} \leq C\left\|g\right\|_{L^p(\Om,\mu^{A,\om}_B(\cdot,\Re))}.
\end{equation}
\end{rem}

Let us mention that using this Remark we get $\u \in L^\infty(\Om,\mu_{1-\frac{2}{p}}^{1-\frac{3}{p},\om}(\cdot,\Re))$ with $p$ arbitrarily large and therefore almost the same asymptotic structure as $\bO$.

Let us assume
\begin{equation}
\left\|(\w,s)\right\|_V \leq C_0,
\end{equation}
where $C_0$ is sufficiently large constant which will be determined later. It is important to mention that this constant is determined by the estimates (\ref{ass1}) and is independent of $\Re$ and $\We$. 

Our aim is to prove that also 
\begin{equation}
\left\|\M(\w,s)\right\|_V = \left\|(\u,q)\right\| \leq C_0.
\end{equation}
We recall that we also assume that $(\w,s)$ satisfy (\ref{ass1}).

Throughout the rest of this paper we will use the following notation to simplify things
\begin{equation}
X^{\om} = L^p(\Om,\mu_{1-\frac{2}{p}}^{\frac{3}{2}-\frac{3}{p},\om}(\cdot,\Re))
\end{equation}

We will estimate both $\B(\f,\w,s)$ and $\C(\H,\w,s)$ in $X^{2\om}$. Due to the presence of the Reynolds and Weissenberg numbers in front of each term on the right hand side it is sufficient to show the presence of all terms in $X^{2\om}$, smallness of these terms is achieved by assuming $\Re, \We$ sufficiently small. We will proceed term by term and denote the terms on the right hand side of (\ref{transport}) by $B_1, ..., B_6$ and the corresponding terms of $\C$ by $C_1, ..., C_6$. First we use assumption
\begin{equation}
\f,\H \in X^{2\om},
\end{equation}
which allows us to estimate $B_1, C_1$ in $X^{2\om}$. We estimate $B_2$ in the following way
\begin{equation}
\left\|\w\nabla\w\right\|_{X^{2\om}}^p \leq \left\|\w\right\|_{L^\infty(\Om,\mu_{1-\frac{2}{p}}^{1-\frac{3}{p},\om}(\cdot,\Re))}^p\left\|\nabla\w\right\|_{X^{\om}}^p\left\|\eta^{3-p}_{2-p}(\cdot,\Re)\right\|_{L^\infty(\Om)}.
\end{equation}
The last term is finite for $p > 3$ and therefore using Remark \ref{weightedSobolev}
\begin{equation}
\left\|B_2\right\|_{X^{2\om}} \leq C\Re C_0^2.
\end{equation}
We proceed in the similar way also in the divergence form case. Here
\begin{equation}
\left\|\w\otimes\w\right\|_{X^{2\om}}^p \leq \left\|\w\right\|_{L^\infty(\Om,\mu_{1-\frac{2}{p}}^{1-\frac{3}{p},\om}(\cdot,\Re))}^p\left\|\w\right\|_{L^p(\Om,\mu_{1-\frac{2}{p}}^{1-\frac{3}{p},\om}(\cdot,\Re))}^p\left\|\eta^{3-\frac{p}{2}}_{2-p}(\cdot,\Re)\right\|_{L^\infty(\Om)}
\end{equation}
and therefore for $p > 6$ we get
\begin{equation}
\left\|C_2\right\|_{X^{2\om}} \leq C\Re C_0^2.
\end{equation}
Similar procedure works also for terms $B_3$, $B_4$, $B_5$ and $C_3$, $C_4$, $C_5$, we only show the estimates for $B_3$ and $B_4$ which are most restritive.
\begin{equation}
\left\|\nabla^2\w\nabla\w\right\|_{X^{2\om}}^p \leq \left\|\nabla\w\right\|_{L^\infty(\Om,\mu_{1-\frac{2}{p}}^{\frac{3}{2}-\frac{3}{p},\om}(\cdot,\Re))}^p\left\|\nabla^2\w\right\|_{L^p(\Om,\mu_{1-\frac{2}{p}}^{\frac{3}{2}-\frac{3}{p},\om}(\cdot,\Re))}^p\left\|\eta^{3-\frac{3p}{2}}_{2-p}(\cdot,\Re)\right\|_{L^\infty(\Om)}
\end{equation}
\begin{equation}
\left\|\nabla\w\nabla s\right\|_{X^{2\om}}^p \leq \left\|\nabla\w\right\|_{L^\infty(\Om,\mu_{1-\frac{2}{p}}^{\frac{3}{2}-\frac{3}{p},\om}(\cdot,\Re))}^p\left\|\nabla s\right\|_{L^p(\Om,\mu_{\frac{1}{2}-\frac{2}{p}}^{\frac{3}{2}-\frac{3}{p},\om}(\cdot,\Re))}^p\left\|\eta^{3-\frac{3p}{2}}_{2-\frac{p}{2}}(\cdot,\Re)\right\|_{L^\infty(\Om)}
\end{equation}
Linear terms $B_6$, $C_6$ are trivial. We get 
\begin{equation}
\begin{split}
\left\|B_3,B_4,C_3,C_4\right\|_{X^{2\om}} \leq C\We C_0^2, \\
\left\|B_5,C_5\right\|_{X^{2\om}} \leq C\Re\We C_0^2, \\
\left\|B_6,C_6\right\|_{X^{2\om}} \leq \Re^{1-\om}\We C_0 
\end{split}
\end{equation}
and putting all calculations together we end up with
\begin{equation}
\begin{split}
\left\|\B(\f,\w,s)\right\|_{X^{2\om}} \leq C((\Re+\We)C_0^2 + \Re^{1-\om}\We C_0), \\
\left\|\C(\H,\w,s)\right\|_{X^{2\om}} \leq C((\Re+\We)C_0^2 + \Re^{1-\om}\We C_0),
\end{split}
\end{equation}
for $p > 6$. Now we can use Theorem \ref{weightedtrans} on the equation (\ref{transport}) to get
\begin{equation}
\left\|\z\right\|_{X^{2\om}} \leq C\left\|\B(\f,\w,s)\right\|_{X^{2\om}} \leq C((\Re+\We)C_0^2 + \Re^{1-\om}\We C_0).
\end{equation}
Moreover we can write the equation (\ref{transport}) in the following form
\begin{equation}
\z = \div\left[\C(\H,\w,s) - \We\z\otimes(\w + \e_1)\right]
\end{equation}
since $\div(\w + \e_1) = 0$. Hence $\z = \div \Z$ for some tensor $\Z$ and 
\begin{equation}
\left\|\Z\right\|_{X^{2\om}} \leq \left\|\C(\H,\w,s)\right\|_{X^{2\om}} + \We\left\|\z\right\|_{X^{2\om}}\left\|(\w + \e_1)\right\|_\infty \leq C((\Re+\We)C_0^2 + \Re^{1-\om}\We C_0).
\end{equation}

Now we can proceed with the Oseen equation (\ref{Oseen}) with the right hand side $\z = \div \Z$. We use the integral representation formulas (\ref{int0}) - (\ref{int2}) and (\ref{intp0}) - (\ref{intp1}) to estimate $(\u,q)$ in $V$. We can split $\u$ into $\u = \u^V + \u^S$, where $\u^V$ denotes the volume integral and $\u^S$ denotes the surface integrals. Similarly we split $\nabla\u$, $\nabla^2\u$, $q$ and $\nabla q$.

We start with the estimates of the volume parts. For the estimate of $\u^V \in L^p(\Om,\mu_{1-\frac{2}{p}}^{1-\frac{3}{p},\om}(\cdot,\Re))$ we use Theorem \ref{theo6} and its Corollary and get 
\begin{equation}\label{uV}
\left\|\u^V\right\|_{L^p(\mu_{1-\frac{2}{p}}^{1-\frac{3}{p},\om}(\cdot,\Re))} \leq C\Re^{-1+\om}\left\|\Z\right\|_{X^{2\om}} \leq C((\Re^\om + \Re^{-1+\om}\We)C_0^2 + \We C_0),
\end{equation}
which can be made sufficiently small by choosing $\Re, \We$ small.

For the estimates of $(\nabla\u)^V$ and $(\nabla^2\u)^V$ we use Theorem \ref{Koch} and its Corollary and get 
\begin{equation}
\left\|(\nabla\u)^V\right\|_{L^p(\mu_{1-\frac{2}{p}}^{\frac{3}{2}-\frac{3}{p},\om}(\cdot,\Re))} \leq C\Re^{\om}\left\|\Z\right\|_{X^{2\om}} \leq C((\Re^{1+\om}+\Re^\om\We)C_0^2 + \Re\We C_0)
\end{equation}
\begin{equation}
\left\|(\nabla^2\u)^V\right\|_{L^p(\mu_{1-\frac{2}{p}}^{\frac{3}{2}-\frac{3}{p},\om}(\cdot,\Re))} \leq C\Re^{\om}\left\|\z\right\|_{X^{2\om}} \leq C((\Re^{1+\om}+\Re^\om\We)C_0^2 + \Re\We C_0)
\end{equation}
Again, terms on the right-hand sides can be made sufficiently small by choosing $\Re, \We$ small. 
For the estimates of $q^V$ and $(\nabla q)^V$ we use Theorem \ref{theo7} and its Corollary and we obtain
\begin{multline}
\left\|q^V\right\|_{L^p(\mu_{\frac{1}{2}-\frac{2}{p}}^{\frac{3}{2}-\frac{3}{p},\om}(\cdot,\Re))} \leq C\left\|\Z\right\|_{L^p(\mu_{\frac{1}{2}-\frac{2}{p}}^{\frac{3}{2}-\frac{3}{p},\om}(\cdot,\Re))} \leq \\ 
\leq C\left\|\Z\right\|_{X^{2\om}} \leq C((\Re+\We)C_0^2 + \Re^{1-\om}\We C_0)
\end{multline}
\begin{multline}
\left\|(\nabla q)^V\right\|_{L^p(\mu_{\frac{1}{2}-\frac{2}{p}}^{\frac{3}{2}-\frac{3}{p},\om}(\cdot,\Re))} \leq  C\left\|\z\right\|_{L^p(\mu_{\frac{1}{2}-\frac{2}{p}}^{\frac{3}{2}-\frac{3}{p},\om}(\cdot,\Re))} \leq \\
 \leq C\left\|\z\right\|_{X^{2\om}} \leq C((\Re+\We)C_0^2 + \Re^{1-\om}\We C_0)
\end{multline}
Again the right-hand sides can be made small same way as before.

Next we proceed with the surface integrals. Here we distinguish three cases 
\begin{equation} 
\begin{split}
\Om_1 &= \left\{\x \in \Om, \abs{\x} \leq 1\right\} \\
\Om_2 &= \left\{\x \in \Om, 1 \leq \abs{\x} \leq \frac{1}{\Re}\right\} \\
\Om_3 &= \left\{\x \in \Om, \abs{\x} \geq \frac{1}{\Re}\right\}.
\end{split}
\end{equation}
In the case $\Om_1$ all our weights $\sim 1$ and we do not use the integral representation formulas. We rather use the following estimate
\begin{equation}
\left\|\u\right\|_{L^p(\Om_1)} \leq C(1 + \left\|\nabla\u\right\|_{W^{1,2}(\Om_1)}) \leq C(1 + \left\|\nabla\u\right\|_{W^{1,2}(\Om)}) \leq C(1+K)
\end{equation}
which is due to Friedrichs inequality and (\ref{ass1}). Arising term $C(1+K)$ can be made small in comparison with $C_0$ by choosing $C_0$ large enough. Together with (\ref{uV}) we get 
\begin{equation}
\left\|\u^S\right\|_{L^p(\Om_1,\mu_{1-\frac{2}{p}}^{1-\frac{3}{p},\om}(\cdot,\Re))} \leq C(1+K) + C((\Re^\om + \Re^{-1+\om}\We)C_0^2 + \We C_0).
\end{equation}
We use analogous procedure also for $\nabla\u, \nabla^2\u$ and get 
\begin{equation}
\left\|\nabla\u,\nabla^2\u\right\|_{L^p(\Om_1)} \leq C(1 + \left\|\nabla\u\right\|_{W^{3,2}(\Om_1)}) \leq C(1 + \left\|\nabla\u\right\|_{W^{3,2}(\Om)}) \leq C(1+K)
\end{equation}
and therefore
\begin{equation}
\left\|(\nabla\u)^S, (\nabla^2\u)^S\right\|_{L^p(\Om_1,\mu_{1-\frac{2}{p}}^{\frac{3}{2}-\frac{3}{p},\om}(\cdot,\Re))} \leq C(1+K) + C((\Re^{1+\om}+\Re^\om\We)C_0^2 + \Re\We C_0).
\end{equation}
Analogously for the pressure.

Next we proceed with the case $\Om_2$. We start with $\u^S$ and denote four terms in the surface integral (\ref{int0}) by $\u^{S,1}$, ..., $\u^{S,4}$. For $\u^{S,1}$ we have
\begin{multline}
\left|\u^{S,1}(\x)\right|^p\abs{\x}^{p\om}(1+\Re\abs{\x})^{p-3-p\om}(1+s(\Re\x))^{p-2} \leq \\ \leq \Re^p\abs{\x}^{p\om}(1+\Re\abs{\x})^{p-3-p\om}(1+s(\Re\x))^{p-2}\left|\int_{\de\Om}\O_{ij}(\x-\y,\Re)\d\y\right|^p \leq \\
\leq C\Re^p\abs{\x}^{p\om}(1+\Re\abs{\x})^{p-3-p\om}(1+s(\Re\x))^{p-2}\left|\O_{ij}(\x,\Re) + \nabla\O_{ij}(\frac{\x}{2},\Re)\right|^p \leq \\
\leq C\Re^p(1+\Re\abs{\x})^{p-3-p\om}(1+s(\Re\x))^{p-2}\left(\frac{1}{\abs{\x}^{p-p\om}} + \frac{1}{\abs{\x}^{2p-p\om}}\right),
\end{multline} 
where the crucial estimate is the following
\begin{equation}
\left|\nabla^k\bO(\x,\Re)\right| \leq C\frac{\Re^{\frac{k}{2}}}{\abs{\x}^{1+\frac{k}{2}}}
\end{equation}
for $k \geq 0$. We use this estimate throughout the rest of the procedure in the case $\Om_2$. Therefore
\begin{equation}
\left\|\u^{S,1}\right\|_{L^p(\Om_2,\mu_{1 - \frac{2}{p}}^{1 - \frac{3}{p},\om}(\cdot,\Re))}^p \leq C\Re^p\int_{\Om_2}\left(\frac{1}{\abs{\x}^{p-p\om}} + \frac{1}{\abs{\x}^{2p-p\om}}\right)\d\x
\end{equation}
Arising functions are integrable and $\int_{\Om_2} \abs{\x}^{p\om-p} \d\x \leq \frac{4\pi}{p-p\om-3} \leq 4\pi$ for $p>6$ and $\om < \frac{1}{2}-\frac{3}{2p}$, i.e. integrals of such functions over $\Om_2$ are bounded independently of $\Re$ by universal constant $4\pi$. Therefore this term can be estimated choosing $C_0$ large and at this point we do not require $\Re$ to be small, even if we have $\Re^p$ at our disposal. This fact will play a role in estimating $\u^{S,2}$, $\u^{S,3}$ and $\u^{S,4}$, where there is no power of $\Re$ available.  

Next for $\u^{S,2}$ we proceed similarly
\begin{multline}
\left|\u^{S,2}(\x)\right|^p\abs{\x}^{p\om}(1+\Re\abs{\x})^{p-3-p\om}(1+s(\Re\x))^{p-2} \leq \\
\leq C\abs{\x}^{p\om}(1+\Re\abs{\x})^{p-3-p\om}(1+s(\Re\x))^{p-2} \times \\ \times \left|\nabla\O_{ij}(\x,\Re) + \nabla^2\O_{ij}(\frac{\x}{2},\Re) + e_i(\x) + \nabla e_i(\frac{\x}{2})\right|^p \leq \\
\leq C(1+\Re\abs{\x})^{p-3-p\om}(1+s(\Re\x))^{p-2}\left(\frac{1}{\abs{\x}^{2p-p\om}} + \frac{1}{\abs{\x}^{3p-p\om}}\right)
\end{multline}
and therefore
\begin{equation}
\left\|\u^{S,2}\right\|_{L^p(\Om_2,\mu_{1 - \frac{2}{p}}^{1 - \frac{3}{p},\om}(\cdot,\Re))}^p \leq C\int_{\Om_2}\left(\frac{1}{\abs{\x}^{2p-p\om}} + \frac{1}{\abs{\x}^{3p-p\om}}\right)\d\x
\end{equation}
and we are in similar situation as in the case $\u^{S,1}$.

We treat $\u^{S,3}$ and $\u^{S,4}$ together
\begin{multline}
\left|\u^{S,3}+\u^{S,4}(\x)\right|^p\abs{\x}^{p\om}(1+\Re\abs{\x})^{p-3-p\om}(1+s(\Re\x))^{p-2} \leq \\
\leq C\abs{\x}^{p\om}(1+\Re\abs{\x})^{p-3-p\om}(1+s(\Re\x))^{p-2}\left|\O_{ij}(\x,\Re) + \nabla\O_{ij}(\frac{\x}{2},\Re)\right|^p \times \\
\times\left(\left\|\nabla\u\right\|_{W^{1,2}(\Om)} + \left\|q\right\|_{W^{1,2}(\Om)} + \left\|\Z\right\|_{W^{1,2}(\Om)}\right) \leq \\
\leq CK(1+\Re\abs{\x})^{p-3-p\om}(1+s(\Re\x))^{p-2}\left(\frac{1}{\abs{\x}^{p-p\om}} + \frac{1}{\abs{\x}^{2p-p\om}}\right)
\end{multline}
and therefore
\begin{equation}
\left\|\u^{S,3} + \u^{S,4}\right\|_{L^p(\Om_2,\mu_{1 - \frac{2}{p}}^{1 - \frac{3}{p},\om}(\cdot,\Re))}^p \leq CK\int_{\Om_2}\left(\frac{1}{\abs{\x}^{p-p\om}} + \frac{1}{\abs{\x}^{2p-p\om}}\right)\d\x.
\end{equation}
Here we have used also (\ref{ass1}).

For higher gradients of $\u$ and pressure and its gradient we use similar procedure, in this case higher gradients are even easier to estimate.

We finish with the case $\Om_3$. Here the situation is a little different. We have
\begin{multline}
\left|\u^{S,1}(\x)\right|^p\abs{\x}^{p\om}(1+\Re\abs{\x})^{p-3-p\om}(1+s(\Re\x))^{p-2} \leq \\ \leq \Re^p\abs{\x}^{p\om}(1+\Re\abs{\x})^{p-3-p\om}(1+s(\Re\x))^{p-2}\left|\int_{\de\Om}\O_{ij}(\x-\y,\Re)\d\y\right|^p \leq \\
\leq C\Re^p\abs{\x}^{p\om}(1+\Re\abs{\x})^{p-3-p\om}(1+s(\Re\x))^{p-2}\left|\O_{ij}(\x,\Re) + \nabla\O_{ij}(\frac{\x}{2},\Re)\right|^p \leq \\
\leq C\Re^p\abs{\x}^{p\om}(1+\Re\abs{\x})^{p-3-p\om}(1+s(\Re\x))^{p-2} \times \\ \times\left(\frac{\Re^p}{\abs{\Re\x}^p(1+s(\Re\x))^p} + \frac{\Re^{2p}}{\abs{\Re\x}^{\frac{3p}{2}}(1+s(\Re\x))^{\frac{3p}{2}}}\right).
\end{multline}
Here we have used that
\begin{equation}
\left|\nabla^k\bO(\x,\Re)\right| \leq C\frac{\Re^{\frac{k}{2}}}{\abs{\x}^{1+\frac{k}{2}}(1+s(\Re\x))^{1+\frac{k}{2}}}
\end{equation}
for $k \geq 0$. Therefore
\begin{multline}
\left\|\u^{S,1}\right\|_{L^p(\Om_3,\mu_{1 - \frac{2}{p}}^{1 - \frac{3}{p},\om}(\cdot,\Re))}^p \leq C \Re^{2p-p\om} \int_{\Om_3} \left(\frac{1}{(1+\abs{\Re\x})^{3}}\frac{1}{(1+s(\Re\x))^{2}}\right)\d\x + \\
+ C \Re^{3p-p\om} \int_{\Om_3} \left(\frac{1}{(1+\abs{\Re\x})^{3+\frac{p}{2}}}\frac{1}{(1+s(\Re\x))^{2+\frac{p}{2}}}\right)\d\x \leq \\ \leq C\Re^{2p-p\om-3}\int_{\rtri}\eta^{-3}_{-2}(\y)\d\y + C\Re^{3p-p\om-3}\int_{\rtri}\eta^{-3-\frac{p}{2}}_{-2-\frac{p}{2}}(\y)\d\y
\end{multline}
Arising integrals are finite due to (\ref{integrationeta}).

For $\u^{S,2}$ we obtain in the similar way
\begin{multline}
\left\|\u^{S,2}\right\|_{L^p(\Om_3,\mu_{1 - \frac{2}{p}}^{1 - \frac{3}{p},\om}(\cdot,\Re))}^p \leq C\Re^{2p-p\om}\int_{\Om_3}\left(\frac{1}{(1+\abs{\Re\x})^{3+\frac{p}{2}}}\frac{1}{(1+s(\Re\x))^{2+\frac{p}{2}}}\right)\d\x + \\
+ C\Re^{2p-p\om}\int_{\Om_3}\left(\frac{1}{(1+\abs{\Re\x})^{3+p}}\frac{1}{(1+s(\Re\x))^{2-p}}\right)\d\x + \\
+ C\Re^{3p-p\om}\int_{\Om_3}\left(\frac{1}{(1+\abs{\Re\x})^{3+p}}\frac{1}{(1+s(\Re\x))^{2+p}}\right)\d\x + \\
+ C\Re^{3p-p\om}\int_{\Om_3}\left(\frac{1}{(1+\abs{\Re\x})^{3+2p}}\frac{1}{(1+s(\Re\x))^{2-p}}\right)\d\x.
\end{multline}
Treating $\u^{S,3}$ and $\u^{S,4}$ together we get 
\begin{multline}
\left\|\u^{S,3} + \u^{S,4}\right\|_{L^p(\Om_3,\mu_{1 - \frac{2}{p}}^{1 - \frac{3}{p},\om}(\cdot,\Re))}^p \leq C\Re^{p-p\om}\int_{\Om_3}\left(\frac{1}{(1+\abs{\Re\x})^{3}}\frac{1}{(1+s(\Re\x))^{2}}\right)\d\x + \\
+ C\Re^{2p-p\om}\int_{\Om_3}\left(\frac{1}{(1+\abs{\Re\x})^{3+\frac{p}{2}}}\frac{1}{(1+s(\Re\x))^{2+\frac{p}{2}}}\right)\d\x \leq \\ \leq C\Re^{p-p\om-3}\int_{\rtri}\eta^{-3}_{-2}(\y)\d\y + C\Re^{2p-p\om-3}\int_{\rtri}\eta^{-3-\frac{p}{2}}_{-2-\frac{p}{2}}(\y)\d\y.
\end{multline}

Again, we proceed similarly with the estimates of gradients of $\u$ and pressure and its gradient. Putting all calculations together, choosing first $C_0$ sufficiently large and then $\Re, \We$ sufficiently small we finally end up with
\begin{equation}
\left\|(\u,q)\right\|_V < C_0
\end{equation}
and the proof of Theorem \ref{maintheo} is finished.


\begin{thebibliography}{99}

\bibitem{CoGa} Coscia, V., Galdi, G. P.: {\it Existence, Uniqueness and Stability of Regular Steady Motions of a Second-Grade Fluid}, Int. J. Non-Linear Mechanics, \textbf{29}, No. 4,  (1994) 493--506.

\bibitem{DuFo} Dunn, J. E., Fosdick, R. L.: {\it Thermodynamics, Stability and Boundedness of Fluids of Complexity 2 and Fluids of Second Grade}, Arch. Rational Mech. Anal., \textbf{56} (1974) 191--252.

\bibitem{Finn} Finn, R.: {\it Estimates at infinity for stationary solution of the Navier-Stokes equations}, Bult. Math. de la Soc. Sci. Math. de la R. P. R., Tome 3, \textbf{51}, 4 (1959), 387--418.

\bibitem{Ga} Galdi, G. P.: {\bf An Introduction to the Mathematical Theory of the Navier-Stokes Equations I}, Springer Tracts in Natural Philosophy, Vol. \textbf{38}, Springer, New York, (1994).

\bibitem{GaPaRa} Galdi, G. P., Padula, M., Rajagopal, K. R.: {\it On the conditional stability of the rest state of a fluid of second grade in unbounded domains}, Arch. Rational Mech. Anal., \textbf{109}, No. 2, (1990) 173--182.

\bibitem{Koch} Koch, H.: {\it Partial differential equations and singular integrals}, Dispersive nonlinear problems in mathematical physics, Quad. Mat., \textbf{15}, Dept. Math., Seconda Univ. Napoli, Caserta (2004) 59--122. 

\bibitem{KrNoPo} Kra\v cmar, S., Novotn\'y, A., Pokorn\'y, M.: {\it Estimates of three-dimensional Oseen kernels in weighted $L^p$ spaces}, in: da Veiga, B., Sequeira, A., Videman, J. (Eds.): \textbf{Applied Functional Analysis}, (1999) 281--316. 

\bibitem{MoSo} Mogilevskii, I. Sh., Solonnikov V. A.: {\it Problem on a Stationary Flow of a Second-Grade Fluid in H\"older Classes of Functions}, Zapisky Nauc. Sem. LOMI, \textbf{243}, (1997) 154--165. 

\bibitem{No} Novotn\'y, A.: {\it About the steady transport equation}, Proceedings of the Fifth Winter School at Paseky, Pitman Research Notes in Mathematics, (1998).

\bibitem{NoPo} Novotn\'y, A., Pokorn\'y, M.: {\it Three-dimensional Steady Flow of Viscoelastic Fluid past an Obstacle}, J. Math. Fluid Mech., \textbf{2}, (2000), 294--314.

\bibitem{PoPhd} Pokorn\'y, M.: {\bf Asymptotic behaviour of solutions to certain PDE's describing the flow of fluids in unbounded domains}, Ph.D. thesis, Charles University in Prague and University of Toulon and Var, Toulon-La Garde, (1999).

\bibitem{Smith} Smith, D. R.: {\it Estimates at Infinity for Stationary Solutions of the N. S. Equations in Two Dimensions}, Arch. Rat. Mech. Anal. \textbf{20}, (1965), 341--372.

\bibitem{TrNo} Truesdell, C., Noll, W.: {\bf The Nonlinear Field Theories of Mechanics}, Handbuch der Physik, III/3, Springer Verlag, Heidelberg (1965). 

\end{thebibliography}
\end{document}